\documentclass[aps,prx,twocolumn,superscriptaddress,floatfix,
  nofootinbib]{revtex4-2}
\usepackage{amsmath,amssymb,amsthm}
\usepackage{graphicx}
\usepackage{booktabs}
\usepackage{xcolor}
\usepackage{tikz}
\usetikzlibrary{arrows.meta,positioning,decorations.pathreplacing}
\usepackage{pgfplots}
\pgfplotsset{compat=1.18}

\newtheorem{theorem}{Theorem}
\newtheorem{lemma}{Lemma}
\newtheorem{corollary}{Corollary}
\newtheorem{definition}{Definition}
\newtheorem{remark}{Remark}

\newcommand{\Abfk}{\mathcal{A}}                       
\newcommand{\Rbfk}[2]{\mathcal{R}_{\mathrm{BFK}}(#1,#2)} 
\newcommand{\CCZ}{\mathrm{CCZ}}
\newcommand{\ket}[1]{|#1\rangle}
\newcommand{\bra}[1]{\langle #1|}

\begin{document}

\title{BPBO: Blindness-Preserving Brickwork Optimization
by Certified Region Resynthesis}

\author{Youngkyung Lee}
\email{youngklee@etri.re.kr}
\affiliation{Cryptography Engineering Laboratory, Electronics and Telecommunications Research Institute, Daejeon, South Korea}
\author{Juyoung Kim}
\email{ap424@etri.re.kr}
\affiliation{Cryptography Engineering Laboratory, Electronics and Telecommunications Research Institute, Daejeon, South Korea}
\author{Doyoung Chung}
\email{thisisdoyoung@etri.re.kr}
\affiliation{Cryptography Engineering Laboratory, Electronics and Telecommunications Research Institute, Daejeon, South Korea}
\date{June 29, 2026}

\begin{abstract}
Universal blind quantum computation (UBQC) hides a client's computation by
using a computation-independent BFK09 brickwork graph and encoding the
computation in measurement angles, which limits the use of graph-changing
optimizations.  We study
blindness-preserving brickwork optimization (BPBO): certified local
resynthesis of BFK09-compatible brickwork patterns below the blinding layer.
BPBO detects one-, two-, and three-wire regions; for each candidate region it
either proves a semantic floor or supplies an executable witness, and it
accepts a replacement only after its branch-frame, output-frame, and blinding
behavior have been checked.  The optimized outputs remain standard brickwork
patterns and are evaluated with a logical qubit-recycled UBQC execution stack
that runs arbitrary-length patterns using $n\times 2$ active logical qubits.  The layer
evidence includes a one-wire H-count floor, a two-wire CNOT-cost floor, a
three-wire parity-ledger floor, a clean three-cell $\CCZ$ witness whose
optimality claim is scoped to the CNOT+T phase-gadget family, and an
endpoint-target three-cell CCX/Toffoli application witness; the fixed
middle-target CCX case is retained as a four-cell fallback.  The security
statement is a compatibility result: BPBO preserves UBQC blindness at the
declared optimized dimensions and remains compatible with inherited
verification guarantees under explicit test-round conditions, without
introducing a new trap-soundness theorem.  On Bell/CX, Grover-2,
endpoint-Toffoli, and Grover-3 evaluation cases, BPBO demonstrates certified
local reductions; in the largest case, Grover-3, the materialized pattern is
reduced from $3\times 725$ to $3\times 98$ while preserving the expected
marked-state statistics up to sampling noise.
\end{abstract}

\maketitle

\section{Introduction}\label{sec:intro}

A client who delegates a quantum computation to an untrusted server would
like the server to learn nothing about that computation---a goal predating
the protocols that achieve it~\cite{childs05}. Universal blind
quantum computation (UBQC) achieves this with a structural commitment: client
and server execute a measurement-based computation~\cite{rb01} on a brickwork
graph fixed \emph{independently} of the computation, so that all
computational content resides in a stream of one-time-padded measurement
angles~\cite{bfk09}; trap-based extensions make the delegation verifiable at
practical overhead~\cite{fk17,leichtle21}, and verifiable blind delegation
has reached hardware demonstration~\cite{drmota24,polacchi25}. The same
commitment blocks most direct pattern optimizations.
Many standard routes to a smaller \emph{pattern}---standardization
\cite{dkp07}, Pauli-flow preprocessing~\cite{simmons21}, and
flow-preserving or graph-level rewriting~\cite{mcelvanney22,graphix,duncan20}---reorganize
the command structure or change the graph/graph-like representation, while the
BFK09 leakage model requires the server-facing graph rule to remain
computation-independent. Circuit-level optimization~\cite{amm14} survives, but
only upstream of the lowering, where it cannot touch the redundancy the
lowering itself introduces.
In this model, the public topology rule constrains optimization, while the
declared dimensions remain permitted leakage. Pattern size is therefore an
optimizable resource only when every rewrite stays strictly below the blinding
layer, and provably so.

The strategy is therefore two-part.  Because savings from computation-dependent
server-facing graph changes are outside the admitted leakage surface unless
hidden by padding or a separate leakage argument, BPBO keeps each admitted
pattern inside the standard public brickwork family, declares the optimized
dimensions, and makes pattern size the certified object: a local shortening is
admitted only with a floor certificate or executable witness whose semantics,
frame behavior, and blinding compatibility are checked.  Because even an
optimized blind pattern may remain long (Grover-3 lowers to $3\times 725$
before optimization), a separate qubit-recycled runner executes the resulting
standard brickwork pattern column by column with $n\times 2$ active logical
qubits. The optimization side certifies what can be shortened without changing
the server-facing topology rule; the execution side makes the remaining long
blind pattern runnable.

We study blindness-preserving brickwork optimization (BPBO), an
arity-stratified certified region-resynthesis method for the standard BFK09
brickwork family at declared dimensions (Fig.~\ref{fig:framework}). BPBO
operates below the blinding layer. It detects local one-, two-, and three-wire
regions and produces two distinct forms of evidence: floor certificates, which
support lower-bound claims, and executable witnesses, which can be materialized
only after semantic, branch-frame, output-frame, and blinding-compatibility
checks. The accepted output is still a standard BFK09 brickwork pattern. A
separate logical qubit-recycled execution stack then runs the resulting pattern
at constant logical active width, using the two-column window naturally induced
by column-ordered brickwork measurements.

The name is meant literally.  \emph{Blindness-preserving} means that the
accepted executable rewrite changes only the declared pattern dimensions and
the hidden angle stream, not the public graph family or the UBQC transcript
interfaces.  \emph{Brickwork} fixes the domain to BFK09-compatible patterns.
\emph{Optimization} means local, certificate-gated shortening, not an
unconstrained circuit compiler.
Operationally, BPBO treats brickwork cells as resource suppliers and target
regions as requirement ledgers, then applies one certify--construct--admit
discipline across arities. The L1 layer handles one-wire regions through
H-count certificates and one-brick witnesses; L2 handles two-wire entangling
regions through CNOT-cost certificates and registered two-wire witnesses; L3
handles three-wire CCZ/CCX-class regions through parity-ledger certificates
and explicit application witnesses. A floor certificate is kept separate from
an executable replacement: the method may prove a lower bound even when the
submitted artifact runtime has no admitted materialization for that region.

The primary contribution is BPBO as a protocol-safe optimization contract,
supported by execution and evidence components.
\begin{itemize}
\item \emph{Blindness-preserving brickwork optimization.} We formalize BPBO
  as a certificate-gated local optimizer for BFK09-compatible brickwork patterns,
  with a common
  certify--construct--admit rule for L1 one-wire, L2 two-wire, and L3
  three-wire regions (Secs.~\ref{sec:bpbo}--\ref{sec:pipeline}).
\item \emph{Qubit-recycled execution.} We evaluate optimized brickwork
  patterns with a logical execution stack that uses $n\times 2$ active logical
  qubits, preserves the UBQC transcript interfaces under the admission
  conditions, and admits a dynamic-circuit lowering with mid-circuit
  measurement and reset~\cite{decross23}
  (Sec.~\ref{sec:platform}).
\item \emph{Protocol admission and evidence.} We prove branch-frame closure,
  a conditional UBQC-blindness compatibility theorem at the declared optimized
  leakage, and compatibility with inherited trap-based verification under
  explicit admission conditions; the implementation is
  checked on layer-representative cases, with Grover-3 used as an integrated
  application rather than as a broad compiler-performance benchmark
  (Secs.~\ref{sec:security} and~\ref{sec:results}).
\end{itemize}

The representative reductions are chosen to cover the layers rather than to
serve as an empirical optimality study. One-wire SYNTH1Q realizes an admitted
one-brick subset of the H-count floor; Bell/CX and Grover-2 exercise the
two-wire layer; endpoint-target CCX/Toffoli exercises an L3 application
witness; and Grover-3 combines four admitted L3 applications into a full UBQC
run. The benchmark corpus is implementation-facing, CCZ optimality is scoped
to the CNOT+T phase-gadget family, and verification compatibility is inherited
from the underlying protocol under the stated admission conditions. Together,
these cases exercise the L1, L2, L3, and integrated-composition admission paths
claimed by the submitted artifact.

The rest of the paper fixes the UBQC notation and leakage model
(Sec.~\ref{sec:prelim}), describes recycled brickwork execution
(Sec.~\ref{sec:platform}), defines BPBO and its layer certificates
(Secs.~\ref{sec:bpbo} and~\ref{sec:certificates}), states the admission and compatibility
conditions (Sec.~\ref{sec:security}), explains the implementation pipeline
(Sec.~\ref{sec:pipeline}), reports results (Sec.~\ref{sec:results}), and
then discusses related work, scope, and reproducibility.

\begin{figure*}[tbp]
\centering
\begin{minipage}[b]{0.60\textwidth}
\centering
\begin{tikzpicture}[font=\scriptsize,
  box/.style={draw, rounded corners=1.5pt, align=center, inner sep=3pt,
              minimum height=7mm},
  flow/.style={-{Stealth[length=1.6mm]}, semithick}]
\node[box] (circ) {circuit\\(client)};
\node[box, right=4.5mm of circ] (fold) {semantic\\fold};
\node[box, right=4.5mm of fold] (dec) {region\\decompose};
\node[box, right=4.5mm of dec] (cert) {floors /\\witnesses};
\node[box, right=4.5mm of cert] (adm)
  {admit (Thms.~\ref{thm:p2}, \ref{thm:blind},\\Lem.~\ref{lem:v1})};
\draw[flow] (circ) -- (fold);
\draw[flow] (fold) -- (dec);
\draw[flow] (dec) -- (cert);
\draw[flow] (cert) -- (adm);
\draw[decorate, decoration={brace, amplitude=3pt}]
  ([yshift=2.5mm]fold.north west) -- ([yshift=2.5mm]adm.north east)
  node[midway, above=3.5pt] {BPBO: certified local resynthesis};
\node[box, below=7mm of adm] (blind)
  {UBQC blinding\\$\delta=\phi'+\theta+r\pi$};
\node[box, left=5mm of blind] (exec)
  {recycled-window\\execution ($n{\times}2$ active)};
\node[box, left=5mm of exec] (test) {test rounds\\(traps, C1--C4)};
\draw[flow] (adm) -- node[right, align=left] {pattern\\$3\times m'$} (blind);
\draw[flow] (blind) -- (exec);
\draw[flow, dashed] (test) -- node[above] {interleave} (exec);
\node[below=2mm of exec, align=center, text=black!60]
  {graph: standard brickwork$(n,m')$ --- computation-independent};
\end{tikzpicture}\\[1mm]
(a)
\end{minipage}\hfill
\begin{minipage}[b]{0.37\textwidth}
\centering
\begin{tikzpicture}[font=\scriptsize, x=5.6mm, y=5.2mm]
\foreach \c in {0,...,2}{\foreach \r in {0,...,2}{
  \node[circle, draw=black!35, fill=black!10, inner sep=1.6pt]
    (q\c\r) at (\c,-\r) {};}}
\foreach \c in {3,4}{\foreach \r in {0,...,2}{
  \node[circle, draw, fill=black!75, inner sep=1.6pt] (q\c\r) at (\c,-\r) {};}}
\foreach \c in {5,6}{\foreach \r in {0,...,2}{
  \node[circle, draw=black!45, dashed, inner sep=1.6pt] (q\c\r) at (\c,-\r) {};}}
\foreach \r in {0,...,2}{
  \draw[black!25] (q0\r) -- (q1\r) -- (q2\r);
  \draw[semithick] (q3\r) -- (q4\r);
  \draw[black!45, dashed] (q4\r) -- (q5\r) -- (q6\r);
  \draw[black!25] (q2\r) -- (q3\r);}
\draw[semithick] (q30) -- (q31);   
\draw[semithick] (q41) -- (q42);
\draw[{Stealth[length=1.5mm]}-, black!60, out=60, in=120, looseness=0.9]
  (q50.north) to node[above, pos=0.55] {measure, reset, reuse} (q20.north);
\draw[decorate, decoration={brace, mirror, amplitude=3pt}]
  ([yshift=-2mm]q32.south west) -- ([yshift=-2mm]q42.south east)
  node[midway, below=3.5pt] {active: $n\times 2=6$};
\node[above=1.5mm of q10, text=black!60] {measured};
\node[above=1.5mm of q50, xshift=2.5mm, text=black!60] {not yet prepared};
\end{tikzpicture}\\[1mm]
(b)
\end{minipage}
\caption{Blindness-preserving brickwork optimization (BPBO) and
qubit-recycled execution workflow. (a)~BPBO lowers a client-side basis stream
to fixed BFK09 brickwork regions, certifies candidate local replacements by
floors or witnesses, and admits only replacements whose frames and leakage
remain protocol-compatible. The materialized output is again a standard
brickwork pattern at the leaked dimensions $(n,m')$, with all computational
content in measurement angles. (b)~Recycled-window
execution: column-ordered measurement and nearest-column entanglement let two
live columns suffice---measured qubits are reset and their indices reused---so
a three-wire pattern of any length runs with six active qubits
(Sec.~\ref{sec:platform}).}
\label{fig:framework}
\end{figure*}

\section{Background and Security Model}\label{sec:prelim}
\subsection{Brickwork UBQC}
In the one-way model~\cite{rb01}, computation proceeds by single-qubit
measurements on an entangled resource state, with corrections propagated by
the measurement calculus~\cite{dkp07}. UBQC~\cite{bfk09} fixes the resource to
the \emph{brickwork} state $G_{n\times m}$: $nm$ qubits in
$\ket{+}$, arranged on $n$ horizontal wires and $m$ columns, joined by
CZ edges along each wire and by vertical rungs whose positions follow a fixed
period-$8$ stagger.  The graph depends only on the declared dimensions, not
on the computation.

The client prepares each qubit as $\ket{+_\theta}$ with
$\theta$ drawn uniformly from the BFK angle alphabet
\[
  \Abfk=\{k\pi/4 : k=0,\dots,7\}.
\]
The server entangles according to $G_{n\times m}$. For each qubit in column
order, the client announces
\[
  \delta=\phi'+\theta+r\pi,
\]
where $\phi'$ is the flow-adapted computation angle and $r$ is a fresh
uniform bit that flips the reported outcome's interpretation.  Measurement
$M^\delta$ projects onto
$\ket{\pm_\delta}=(\ket{0}\pm e^{i\delta}\ket{1})/\sqrt{2}$.
Pauli byproducts are tracked as a classical output frame rather than corrected
by the server.

\subsection{Server View and Leakage}
The server-visible protocol view consists of the declared dimensions, the
standard brickwork graph and column order, the received-qubit ensemble, the
$\delta$ stream, the server's measurement record, and the classical
interaction generated during execution.  Blindness in the sense
of~\cite{bfk09} states that, conditioned on the declared dimensions $(n,m)$,
this view is independent of the underlying computation.

BPBO changes only the accepted brickwork length. For an input circuit $C$, the
leakage surface after optimization is
\[
  L(C)=(n,m'),
\]
where $m'$ is the admitted optimized length. Thus the optimized dimensions are
intentional leakage: Theorem~\ref{thm:blind} is conditioned on $(n,m')$ rather
than on the unoptimized length.  A client that must hide the optimized size can
pad or bucket the pattern back to a canonical length, outside the optimizer;
the bucket policy must be fixed independently of the hidden computation within
the intended privacy class and may trade away some or all reported savings.
Optimizer logs, selected-region metadata, witness identifiers, timing records,
retry traces, and artifact handles are not server-visible protocol
messages unless an implementation explicitly sends them.

\subsection{Security Claim Scope}
The client is restricted to offline preparation of random single-qubit states
$\ket{+_\theta}$ and classical computation; it holds no quantum memory. The
server has unrestricted quantum capability and is arbitrarily malicious: no
computational assumption is made, and it may deviate from the protocol at
will. Communication consists of a one-way quantum channel from client to
server during preparation and a two-way classical channel during execution.

Within this model, the security role of BPBO is compatibility with the base
UBQC leakage model.  The admitted optimized pattern is executed at its
declared dimensions $(n,m')$, and Sec.~\ref{sec:security} states the
conditions under which the resulting view remains blind at that leakage
surface.  Verification guarantees are inherited from the underlying
trap-based protocol~\cite{fk17,leichtle21} when the optimized pattern satisfies
the compatibility conditions specified in Sec.~\ref{sec:security}; BPBO does
not introduce a new trap-soundness theorem.  Qubit recycling is an execution
schedule for the standard brickwork pattern after materialization; it does not
alter the prepare-and-send client model or make allocation metadata part of the
server-visible computation beyond the declared dimensions.

Equally explicit are the \emph{non-claims}: (i)~no new verification-soundness
theorem is proved---soundness is inherited, never re-derived; (ii)~no
composable-security analysis is attempted---claims live in the standalone
model of~\cite{bfk09}; (iii)~the empirical evidence of
Sec.~\ref{sec:results} is simulation-level, not a hardware demonstration; and
(iv)~pattern dimensions remain leaked by definition, so size hiding requires
padding or bucketing that may forgo some optimization savings. These
boundaries are revisited where each is load-bearing
(Secs.~\ref{sec:security} and~\ref{sec:discussion}).

\section{Qubit-Recycled Execution of Brickwork Patterns}\label{sec:platform}

Prior circuit-based UBQC/MBQC simulation demonstrated the feasibility of
small blind computations, including a two-qubit Grover instance, on
gate-model platforms~\cite{leechung25}. This section takes that execution
viewpoint as its starting point but separates it from the optimizer: BPBO
shortens admitted standard brickwork patterns, while the runner executes and
tests those patterns through a two-column active window.

A brickwork pattern of $n$ rows and $m$ columns has $nm$ qubits; the raw
Grover-3 lowering already needs $3\times 725 = 2175$, far beyond full
statevector simulation. The runner's enabling observation is structural:
all entanglers are mutually commuting CZs, horizontal edges connect only
adjacent columns, and measurements are column-ordered---so a qubit's life is
short. The \emph{streaming-window runner} keeps only two columns alive:
before measuring column $c$, it prepares column $c{+}1$ (fresh $\ket{+}$
qubits, the column's vertical rungs, and the horizontal edges into it),
measures column $c$ destructively, removes those qubits from the state, and
reuses their indices. Peak active qubits are $n\times 2$---six for every
three-wire pattern in this paper, and eight in a four-wire smoke test
included in the artifact package---independent of $m$; memory falls from
$2^{nm}$ amplitudes to $2^{2n}$, and time is linear in $m$
(Table~\ref{tab:resources}).

\begin{table*}[t]
\caption{Hardware-facing resource accounting for representative accepted
geometries.  Measurements/resets count logical mid-circuit measurement events
and reset opportunities; a backend may omit the final physical reset.  Column
cycles are adaptive measurement/feed-forward rounds, equal to columns minus
one because the output column is not measured.}
\label{tab:resources}
\centering\scriptsize
\setlength{\tabcolsep}{3pt}
\resizebox{0.82\textwidth}{!}{%
\begin{tabular}{lccccc}
\toprule
case & geometry & full amplitudes & active & meas./reset & cycles \\
\midrule
Bell/CX & $2{\times}5$ & $2^{10}$ & $4$ & $8$ & $4$ \\
Grover-2 & $2{\times}29$ & $2^{58}$ & $4$ & $56$ & $28$ \\
clean $\CCZ$ patch & $3{\times}25$ & $2^{75}$ & $6$ & $72$ & $24$ \\
endpoint CCX & $3{\times}33$ & $2^{99}$ & $6$ & $96$ & $32$ \\
Grover-3 raw reference & $3{\times}725$ & $2^{2175}$ & $6$ & $2172$ & $724$ \\
Grover-3 optimized & $3{\times}98$ & $2^{294}$ & $6$ & $291$ & $97$ \\
$n{=}4$ smoke & $4{\times}24$ & $2^{96}$ & $8$ & --- & $23$ \\
\bottomrule
\end{tabular}}
\end{table*}

\emph{Exactness.} Just-in-time preparation is equivalent to whole-graph
preparation: a measurement on qubit $q$ commutes with every future CZ not
acting on $q$; every CZ acting on $q$ is applied before $q$ is measured
(window $\ge 2$ guarantees this for nearest-column edges); and the adaptive
corrections depend only on classical outcomes, not on when future qubits were
prepared. This equivalence is machine-checked at path-graph, elementary-cell,
and long three-wire scales; the detailed replay thresholds and branch-count
artifacts are reported with the implementation and reproducibility data.  The
point here is structural: recycling changes allocation time, not the
brickwork measurement pattern.

\emph{Adaptive corrections.} Measurement angles adapt by the standard MBQC
rule with dependencies generated east-flow on the brickwork---the brickwork's
causal flow~\cite{bkmp07} is strictly column-ordered, which is also why a
two-column window is the canonical streaming unit rather than a tuned cache
size; every branch of
the $H/T/\mathrm{CNOT}$ elementary cells reproduces the zero branch modulo an
output Pauli frame. The runner uses the same byproduct-frame algebra later
used by BPBO admission, so the runner and admission checks share a single
classical tracking convention~\cite{dkp07}.

\emph{From simulation to dynamic circuits, and the UBQC interfaces.} The window
runner lowers to a logical dynamic-circuit schedule with
$n\times\text{window}$ active qubits, mid-circuit measurement and reset, and
the adaptive corrections as classical conditionals---the same
measure-early-and-reuse principle
established for circuit-model compilation by qubit-reuse
techniques~\cite{decross23}; the brickwork's column structure makes the
two-column reuse schedule canonical for this logical dependency structure
rather than a search problem.  One column cycle is:
prepare the recycled next-column qubits in their blinded $\ket{+_\theta}$
states; apply the fixed vertical-rung CZ layer and the horizontal inter-column
CZ layer; measure the retiring column in the client-provided adaptive bases;
report outcomes and update the branch frame/output decoder; reset the retired
column indices for reuse.  Thus a pattern of $m$ columns has $m{-}1$ adaptive
cycles and time
\[
T_{\mathrm{pattern}} \approx (m-1)\,\tau_{\mathrm{col}} + T_{\mathrm{out}},
\]
where $\tau_{\mathrm{col}}$ is the backend-specific
prepare/CZ/measure/feed-forward/reset latency.  We report this as a logical
schedule and do not claim a backend hardware trace. Statistical validation
and equivalence gates are reported with the implementation and results
sections. The UBQC transcript interfaces are preserved: blinded angle streams,
client/server transcript separation, and input/output one-time-pad
decoding---recycling changes \emph{when} logical qubits exist, not the
server-visible data associated with each qubit.

\section{BPBO Method Contract}\label{sec:bpbo}

BPBO (blindness-preserving brickwork optimization) is a certifying
\emph{local} optimizer for fixed BFK09 brickwork patterns.  Its generality is
an input-domain statement: BPBO can analyze an arbitrary BFK09-compatible
pattern or basis stream.  It is not a claim of global optimality, complete
synthesis, or an executable replacement for every certified floor.  The
contract is:
\begin{center}
\begin{minipage}{0.94\columnwidth}\small
\textbf{Input:} a fixed BFK09 brickwork pattern, or a basis stream that
materializes to one.\\
\textbf{Output:} either the original pattern, or a shorter standard BFK09
pattern $P'$ plus an execution-plan certificate.\\
\textbf{Guarantee:} if BPBO outputs $P'$, then each admitted replacement is
semantically equivalent to the original region up to tracked frames, remains
within the BFK09 angle alphabet, executes with the declared leakage surface
$(n,m')$, and has passed branch-frame, output-frame, and materialized-saving
checks.\\
\textbf{Non-guarantee:} BPBO does not claim a global optimum, a complete
witness compiler, or executable replacements for all floor-certified regions.
\end{minipage}
\end{center}

\subsection{Resource and Requirement View}
BPBO rests on two changes of viewpoint. A brickwork cell is not ``a gate
slot'' but a \emph{resource supplier}: each wire's chain through one cell
supplies exactly two effective Hadamards (the chain factors as
$R_z\,H\,R_z\,H\,R_z$---this single-wire restriction is the $n{=}1$ cell of
$\Rbfk{1}{k}$; larger macro-cells scale these supplies with their
columns), the vertical rungs supply entangling capacity, and
every measured column supplies a free phase from $\Abfk$. Dually, a target
unitary is not ``a circuit'' but a \emph{requirement ledger}: how many
Hadamards, how much entangling capacity, and---per
Lemma~\ref{lem:ledger}---which parity forms must receive odd phase deposits.
Optimization is then a matching problem between supply and requirements, and
it can be \emph{certified} at both ends---by a single principle on the
lower side (Theorem~\ref{thm:floor}), and by an explicit angle-table
witness on the upper.

\subsection{Common Certify--Construct--Admit Rule}
Every accepted rewrite follows the same local rule:
\begin{enumerate}
\item parse the input into a BFK09-compatible stream and detect a candidate
region;
\item canonicalize the region semantics modulo gauges and Pauli frames;
\item compute the region certificate or floor;
\item construct or retrieve a BFK09-realizable witness;
\item verify zero-branch semantic equivalence;
\item check branch-frame closure, output-frame admissibility, and positive
materialized saving;
\item materialize the optimized standard BFK09 pattern, emit its execution
plan, and validate the accepted pattern.
\end{enumerate}
A floor certificate is a lower-bound statement, not an executable
replacement.  Admission is a separate runtime gate.  The status ladder is
\begin{center}
\footnotesize
\begin{tabular}{c}
\textsc{Detected}$\to$\textsc{Canonicalized}$\to$\textsc{Floor}$\to$\textsc{Witness}\\
$\to$\textsc{Admitted}$\to$\textsc{Materialized}$\to$\textsc{Validated}.
\end{tabular}
\end{center}

\subsection{Certified Resynthesis Layers}\label{sec:layers}
BPBO applies this discipline at three resynthesis arities.  The layer names
are not new primitive gates; they are the public organization of the
optimizer's certified region families.
\begin{center}
\scriptsize
\begin{tabular}{@{}llll@{}}
\toprule
Layer & Scope & Floor & Gate \\
\midrule
L1 & 1-wire & H-count $\lceil h/2\rceil$ & R2/R9/R10 \\
L2 & 2-wire & CNOT-cost $\lceil c/2\rceil$ & E1-T/R12/R11/L2 \\
L3 & 3-wire & parity ledger & CCZ/CCX \\
\bottomrule
\end{tabular}
\end{center}
This table is also a scope statement.  The floors are mathematical lower
bounds on region classes; the runtime executes only the admitted subsets for
which a BFK09 witness, branch-frame behavior, output-frame handling, and a
positive materialized saving have all been checked.  Thus a region may be
floor-certified or synthesis-available without being selected for execution.
Surviving cells are handled later by materialization and scheduling; they are
not a fourth resynthesis arity.  The optimized payload is always the
materialized BFK09 pattern, not the preview log that discovered it.

\section{BPBO Layer Certificates and Witnesses}\label{sec:certificates}
The previous section defined BPBO as a local certify--construct--admit
method.  This section supplies the certificates and witnesses used by that
method: the reachable-family notation, the common lower-bound theorem, its
one-, two-, and three-wire instantiations, and the L3 primitive and application
witnesses that connect the floor certificates to evaluated regions.
The claims in this section climb a fixed ladder.  A \emph{semantic target} is
a zero-branch unitary modulo the stated gauge and output-frame conventions; a
\emph{floor certificate} is a lower bound in a stated realization class; a
\emph{schedule hint} is a feasible assignment in an over-approximating menu; an
\emph{explicit witness} is an angle table realizing the target modulo frame;
and a \emph{runtime-admitted witness} additionally passes the branch-frame,
output-frame, and materialized-saving gates of Sec.~\ref{sec:security}.  Exact
cell complexity is claimed only where a scoped lower bound and an explicit
witness meet in the same realization class.

\subsection{Reachable Families}
Let $\Rbfk{n}{k}$ denote the unitaries realizable by $k$ consecutive clean
cells on $n$ wires modulo the permitted boundary gauge and a single-qubit
Pauli output frame, at the zero-branch level (runtime admission is
Sec.~\ref{sec:security}).  For L1/L2 a clean cell is the corresponding
one- or two-wire brickwork unit.  For L3, the counted unit is the clean
three-wire macro-cell defined in Sec.~\ref{sec:ccz}; the reported floor
battery uses the same start convention as the registered witness artifacts.
Its Clifford quotient is closed and fast: modulo Pauli frames, Clifford
reachability uses the binary symplectic/tableau representation of Clifford
action \cite{aaronson04,gottesman98}, reducing to
$\mathrm{Sp}(2n,\mathbb{F}_2)$, of order $6$, $720$, and $1\,451\,520$ for
$n=1,2,3$.  The implementation also keeps signed Clifford tableaus, with
$24$, $11\,520$, and $92\,897\,280$ representatives at these arities.  For the
fixed local arities used here, membership and composition are table-backed
operations (the layer is exhaustively enumerated for $n\le 2$ and checked
against sampled $n=3$ consistency tests). The non-Clifford certificate families
below use the deposit ledger---the phase-polynomial representation of CNOT+T
and phase-gadget circuits~\cite{amm14,cowtan20}, transplanted to a setting
where supply is a fixed cell menu rather than freely placed gates
(Sec.~\ref{sec:related}).

\subsection{The Supply--Demand Floor}\label{sec:sdfloor}
All formal BPBO floor certificates used below fit one bookkeeping template.

\begin{definition}[requirement map; supply filtration]\label{def:sdf}
Fix an arity $n$ and a realization class $F$, including its cell or macro-cell
unit, zero-branch convention, boundary gauges, and output-frame quotient.  A
\emph{requirement map} assigns to each target class $[U]_F$ a demand $D(U)$ in
a partially ordered set, constant on the allowed gauge--Pauli orbit. A
\emph{supply filtration} for $F$ is a monotone family
$S(0)\subseteq S(1)\subseteq S(2)\subseteq\cdots$ that over-approximates
reachability: $U\in\mathcal{R}_F(n,k)$ implies $D(U)\in S(k)$.
\end{definition}

\begin{theorem}[Supply--Demand Floor]\label{thm:floor}
For any requirement map $D$ and supply filtration $S$
(Definition~\ref{def:sdf}) over-approximating the class $F$,
\[
  k^{F}_{\min}(U)\;\ge\;\mathrm{floor}_D(U):=
  \min\{\,k\ge 0 : D(U)\in S(k)\,\},
\]
with the convention that the minimum is $+\infty$ if the set is empty.
\end{theorem}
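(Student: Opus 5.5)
The plan is to unfold the definitions and argue by contraposition; the statement is essentially a monotone-invariant lower bound. First I will fix what $k^{F}_{\min}(U)$ means: the least $k\ge 0$ with $U\in\mathcal{R}_F(n,k)$, where $U$ may be any representative of its class $[U]_F$, and $k^{F}_{\min}(U)=+\infty$ if no such $k$ exists. Because Definition~\ref{def:sdf} makes $D$ constant on the allowed gauge--Pauli orbit, $\mathrm{floor}_D(U)$ is also a class invariant. So both sides of the inequality are well defined on $[U]_F$, and the choice of representative does not matter.

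The core step is one implication. Suppose $k^{F}_{\min}(U)=k^\ast<\infty$. Then some representative of $[U]_F$ lies in $\mathcal{R}_F(n,k^\ast)$. The over-approximation clause of the supply filtration then gives $D(U)\in S(k^\ast)$. Hence $k^\ast$ belongs to the set $\{k\ge 0: D(U)\in S(k)\}$, so that set is nonempty, and its minimum satisfies $\mathrm{floor}_D(U)\le k^\ast$. This is exactly the claim. Note that monotonicity of $S$ is not needed for this inequality. It only makes the floor set upward closed, so that ``$D(U)\notin S(k)$ for all $k<\mathrm{floor}_D(U)$'' can be read as a certificate of infeasibility at every smaller cell count. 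I will remark on this because the later L1--L3 instantiations use it that way.

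Next I handle the infinite cases. If $k^{F}_{\min}(U)=+\infty$, the inequality holds trivially. If $\mathrm{floor}_D(U)=+\infty$, the contrapositive of the core step shows that no finite $k$ realizes $U$, so $k^{F}_{\min}(U)=+\infty$ as well. Under the stated convention, the inequality is then $+\infty\ge+\infty$.

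I do not expect a genuine obstacle, since the argument is bookkeeping. The only care point is keeping the realization class $F$ consistent: the cell unit, the zero-branch convention and the frame quotient must be the same in $\mathcal{R}_F$ and in the over-approximation hypothesis. Otherwise the implication $U\in\mathcal{R}_F(n,k)\Rightarrow D(U)\in S(k)$ could be invoked for the wrong class. I will state this explicitly, because the substantive work is deferred to verifying that hypothesis for each concrete $(D,S)$ pair in the later layer certificates.
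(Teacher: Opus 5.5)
Your proof is correct and is essentially the paper's argument. A $k$-cell realization, combined with orbit-invariance of $D$ and the over-approximation property, forces $D(U)\in S(k)$, hence $\mathrm{floor}_D(U)\le k$; the empty-set case is handled by the same implication, and your remark that monotonicity of $S$ is not needed for the inequality itself is accurate.
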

The proof is one paragraph (App.~\ref{app:proofs}): a $k$-cell realization
in $F$ would place $D(U)$ in $S(k)$ by over-approximation, with
orbit-invariance discharging the realization's gauge and frame dressing.
Two structural consequences carry through every instantiation. First, the
bound is relative to its class $F$: where $S$ over-approximates all clean
windows the floor is unconditional, and where it over-approximates the
CNOT+T phase-gadget family the floor is family-scoped---each corollary
below carries its scope explicitly. Second, over-approximation means
floor-tightness is \emph{not} promised: $D(U)\in S(k)$ does not produce an
angle-level witness at $k$, and the executability ladder of
Sec.~\ref{sec:pipeline} tracks this distinction operationally; the endpoint/middle
CCX layout gap in Sec.~\ref{sec:results} exhibits it live.

\noindent\emph{L1 certificate.}
\begin{corollary}[Hadamard grading, $n=1$]\label{thm:hcount}
Within the one-wire clean-cell model and fixed boundary/frame convention, let
$D(U)=h(U)$ be the minimal Hadamard count over $\Abfk$-decompositions.  With
$S(0)=\{h=0\}$ and $S(k)=\{h\le 2k\}$ for $k\ge 1$ (each cell's wire chain
supplies exactly two Hadamards), the floor is $\lceil h(U)/2\rceil$. At
$n=1$ the filtration is exact---$\Rbfk{1}{k}=\{U : h(U)\le 2k\}$---so the
floor is the one-wire clean-cell cost.
\end{corollary}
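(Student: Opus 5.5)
The plan has two parts. First I instantiate Theorem~\ref{thm:floor} to get the lower bound. Then I prove exactness of the filtration by an explicit construction.

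\emph{Lower bound.} Take the requirement map $D(U)=h(U)$, the minimal number of $H$ factors in a decomposition of $U$ as an alternating word $R_z(\alpha_0)\,H\,R_z(\alpha_1)\,H\cdots$ with all $\alpha_i\in\Abfk$.

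I first check that $D$ is constant on the gauge--Pauli orbit. Boundary gauges are $\Abfk$-rotations $R_z$ at the ends, and Pauli frames are $X^aZ^b$. Up to phase, $Z=R_z(\pi)$ and $X=HZH$. Pushing $X$ through $H$ turns it into $Z$, and $Z$ or $X$ conjugates an $R_z(\alpha)$ into $R_z(\pm\alpha)$, which stays in $\Abfk$. So a frame or gauge dressing can be absorbed into the outer $R_z$ factors without changing the $H$-count, and $h$ is orbit-invariant.

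Next, over-approximation. A one-wire chain through $k$ cells factors as a product of $k$ copies of $R_zHR_zH R_z$, which merge into an alternating word with exactly $2k$ Hadamards. Hence $h(U)\le 2k$, i.e.\ $D(U)\in S(k)$. For $k=0$ the class is only the gauge/frame orbit of the identity, which is $h=0$.

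Theorem~\ref{thm:floor} then gives $k_{\min}\ge\min\{k:h\le 2k\}=\lceil h/2\rceil$. The case $h=0$ is consistent, since then $k=0$.

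\emph{Exactness.} It remains to show $\{h\le 2k\}\subseteq\Rbfk{1}{k}$. Given a word with $h\le 2k$ Hadamards, I pad it to exactly $2k$ Hadamards using the identity $H\,R_z(0)\,H=I$. Each insertion adds two Hadamards, so this works when $2k-h$ is even.

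When $h$ is odd I need to absorb one extra $H$ differently. My first attempt is the Euler-type identity $H=e^{i\pi/4}\,R_z(\pi/2)HR_z(\pi/2)HR_z(\pi/2)$. This is an all-$\Abfk$ realization of $H$ with two Hadamards. Substituting it for one $H$ raises the count by one and fixes the parity.

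After padding, I group the word into consecutive $R_zHR_zH$ blocks. I then read off the cell angle table, with the flow-adapted angles $\phi'$ equal to the $\alpha_i$ in $\Abfk$ and the boundary $R_z$ absorbed into the gauge.

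\emph{Main obstacle.} The hard step is matching the abstract factorization to the actual cell convention. The claim ``each cell's wire chain supplies exactly two Hadamards'' must be justified from the MBQC identity: measuring at angle $\phi$ on a chain implements $HR_z(-\phi)$ up to a byproduct. I also have to check that the number of measured qubits per cell yields precisely $R_zHR_zHR_z$ with independently free $\Abfk$ angles, including the neighbouring cell boundary. If the cell instead yields four $H$ with some forced angles, I would recompute the supply and note that the filtration claim then needs adjusting.

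Finally, the corrections propagate only as Pauli frames, whose effect on the $\Abfk$ rotations I showed above to be sign flips within $\Abfk$. So the zero-branch semantics suffice and the floor equals the one-wire clean-cell cost.
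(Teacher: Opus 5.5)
Your route is the paper's. The lower bound counts $2k$ Hadamards across $k$ concatenated $R_zHR_zHR_z$ chains. Exactness pads an even deficit with identity cells ($HH=I$) and an odd deficit with $H\propto SHSHS$, which is the same identity the paper uses (the phase is $e^{-i\pi/4}$, not $e^{i\pi/4}$, but that is immaterial). The cell factorization you list as the main obstacle is taken by the paper as the definition of the one-wire clean cell, so nothing further is expected there.

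The step that fails as written is your orbit-invariance check, which the paper's sketch does not carry out at all.

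\begin{itemize}
\item \emph{Where it fails.} Absorbing a Pauli frame needs a Hadamard to push $X$ through, so the argument breaks when $h(U)=0$. We have $h(R_z(\alpha))=0$ but $h(XR_z(\alpha))=2$: an antidiagonal matrix cannot equal $R_z(a)HR_z(b)$, because every entry of that product has modulus $1/\sqrt2$.
\item \emph{Consequence.} Your sentence that the frame orbit of the identity has $h=0$ is false. $X\in\Rbfk{1}{0}$, being the identity up to output frame, yet $h(X)=2\notin S(0)$. So over-approximation fails at $k=0$, and Theorem~\ref{thm:floor} would give $U=X$ the floor $1$ even though it is the empty pattern up to its frame.
\item \emph{The repair.} Use the orbit-constant demand $\bar h([U])=\min_P h(PU)$. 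Your pushing argument, applied in both directions, gives $\bar h(U)=h(U)$ for every $U$ that is neither diagonal nor antidiagonal, and $\bar h\le h$ always.
\item \emph{Why $k\ge1$ is unaffected.} The sets $\{h\le 2k\}$ and $\{\bar h\le 2k\}$ coincide, since they differ only on monomial targets, where $h\le 2$. Over-approximation with frames also holds, because you push $P$ through the literal $2k$-Hadamard cell word rather than through a minimal word.
\end{itemize}

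So your exactness argument for $k\ge1$ stands, as does the floor $\lceil h/2\rceil$ on non-monomial targets. Only monomial targets and the $k=0$ level need the corrected demand. This edge case is latent in the corollary's own statement as well.
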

The cell is Hadamard-limited, not $T$-limited: $\Abfk$ phases ride in the
available angle slots (proof sketch in App.~\ref{app:proofs}; exact
cyclotomic enumeration confirms the small-$k$ filtration used by the
artifact). This inverts the resource intuition inherited from $T$-count-centric
circuit optimization.
Operationally, this is the L1 one-wire certified resynthesis layer.  The
production implementation uses deterministic $H;H\mapsto I$ cancellation,
direct one-wire template replacement, and the runtime-admitted one-brick
family displayed as SYNTH1Q.  Only replacements with exact semantic
equivalence, branch-frame admission where required, output-frame admissibility,
and positive materialized saving enter the materialized UBQC pattern; the
theorem above is broader than the current executable subset.

\noindent\emph{L2 certificate.}
\begin{corollary}[Entangling grading, $n=2$]\label{thm:l2}
With $D(U)=c(U)$, the local-equivalence CNOT cost identified from Makhlin
invariants~\cite{makhlin02} and entangling-gate lower bounds~\cite{sbm06}, and
$S(0)=\{c=0\}$, $S(k)=\{c\le 2k\}$ for $k\ge 1$ (two rungs per cell),
$\lceil c(U)/2\rceil$ is an entangling-cell floor.  Equality is certified only
for the named finite classes used below: the Clifford pre-contexts and the
finite $\{I,T,T^\dagger\}$ CNOT-context neighborhoods (exhaustive
verification; App.~\ref{app:proofs}).
\end{corollary}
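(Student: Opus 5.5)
The plan is to obtain Corollary~\ref{thm:l2} from Theorem~\ref{thm:floor}, which requires three checks: that $D(U)=c(U)$ is a valid requirement map, that the proposed $S$ is a valid supply filtration, and, separately, the equality claims for the named finite classes.

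\emph{The requirement map.} First, $c(U)$ must be constant on the allowed gauge--Pauli orbit. The Makhlin invariants~\cite{makhlin02} are invariant under left and right multiplication by $SU(2)\otimes SU(2)$. Boundary gauges and single-qubit Pauli output frames are local unitaries, so $c$, being a function of the local-equivalence class, is orbit-invariant. It takes values in $\{0,1,2,3\}$, which is a totally ordered set. The sets $S(k)=\{c\le 2k\}$ are nested, so the family is monotone.

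\emph{Over-approximation.} Next, show that $U\in\Rbfk{2}{k}$ implies $c(U)\le 2k$. A clean two-wire cell, taken in its zero branch, factors as single-qubit rotations interleaved with at most two CZ rungs. Since CZ is locally equivalent to CNOT, a $k$-cell window is a product containing at most $2k$ CNOT-equivalent gates, with local unitaries in between. The CNOT cost is the minimal number of CNOTs over all decompositions, and by~\cite{sbm06} it is determined by the local-equivalence class. It is therefore subadditive under composition and unchanged by local dressing, which gives $c(U)\le 2k$. For $k=0$ the window is a product of local gates, so $c=0$. Theorem~\ref{thm:floor} then yields $k_{\min}\ge\min\{k: c(U)\le 2k\}=\lceil c(U)/2\rceil$. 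The edge case $c=0$ gives $0$ as required.

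\emph{Equality for the named classes.} This is the main obstacle, because no general argument is available and the floor could be loose for some targets. For each Clifford pre-context I would use the symplectic quotient $\mathrm{Sp}(4,\mathbb{F}_2)$ (order $720$), together with the signed tableau table. I would enumerate the reachable sets $\Rbfk{2}{k}$ at the Clifford level for $k=0,1,2$ by composing cell tableaux over $\Abfk$ angles. The aim is to confirm that every Clifford class with cost $c$ appears by level $\lceil c/2\rceil$. For the $\{I,T,T^\dagger\}$ CNOT-context neighborhoods, the set is finite. For each target I would compute $c$ from the Makhlin invariants and exhibit an explicit $\Abfk$ angle table at $k=\lceil c/2\rceil$ cells, checked by exact cyclotomic arithmetic modulo gauge and Pauli frame. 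Together with the lower bound, this shows the floor is tight on these classes. I would not claim equality outside the enumerated classes.
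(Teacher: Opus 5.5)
Your proposal is correct and follows essentially the same route as the paper. The lower bound is Theorem~\ref{thm:floor} instantiated with the local-equivalence CNOT cost, with two rungs per cell giving $c(U)\le 2k$ for $k$ cells, and equality is restricted to the named finite classes and discharged by exhaustive verification with explicit rewrites; you additionally make explicit the orbit-invariance of $c$ and the CZ/CNOT local equivalence that the paper's sketch leaves implicit, and your subadditivity step is not actually needed, since the $k$-cell window is itself a decomposition with at most $2k$ entanglers.
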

Operationally, this is the L2 two-wire certified resynthesis layer.  The
runtime attempts only registered L2 candidates: T-context (\texttt{E1-T}),
synthesized-context (\texttt{R12-E-pre}), Clifford-context (\texttt{R11}), and
short-region reduction (\texttt{L2-Reduce}).  Candidates are materialized only
after exact zero-branch semantics, branch-frame closure or an equivalent
witness, output-frame discharge, and positive materialized saving pass.  Thus
the claim is not that every two-wire unitary is executed at $\lceil c/2\rceil$,
nor that local one-wire work has zero cost; it is that the entangling-capacity
lower bound is general within the stated cell model, and floor-tight rewrites
are supplied for the finite registered classes.

\subsection{L3 Certificate: The Three-Wire Floor Algorithm}\label{sec:floor}
At $n=3$ the same instantiation pattern yields not a formula but an
algorithm.  The target universe here is the clean three-wire CNOT+T
phase-gadget family after canonicalization over per-wire Hadamard boundary
gauges and Pauli output frames.  The demand $D(U)$ is the resulting affine
linear skeleton together with the nonconstant odd-support syndrome of
Lemma~\ref{lem:ledger}.  For $k$ clean macro-cells, $S(k)$ is the generous
$k$-cell schedule menu: the alternating $2k$ two-rung blocks may realize any
allowed $\mathrm{GL}(2,\mathbb{F}_2)$ action on their active wire pair, and
odd deposits may be placed on parity forms carried by the wires under the
stated start convention.  Lemma~\ref{lem:coverage} is the two-cell instance
used for the $\CCZ$ no-go below; the floor algorithm uses the same menu for
general $k$.

\begin{corollary}[Parity coverage, $n=3$]\label{cor:n3}
With this demand and supply menu, $\mathrm{floor}_D$ is a sound lower bound on
clean-macro-cell cost within the stated phase-gadget family, computable by
finite search.
\end{corollary}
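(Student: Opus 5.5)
The plan is to obtain Corollary~\ref{cor:n3} as an instance of Theorem~\ref{thm:floor}. Two things need checking. First, the three-wire demand $D$ is a genuine requirement map in the sense of Definition~\ref{def:sdf}, meaning it is constant on the gauge--Pauli orbit. Second, the schedule menus $S(k)$ form a monotone filtration that over-approximates $\mathcal{R}_F(3,k)$ for the clean-macro-cell phase-gadget class $F$. Once both hold, soundness of $\mathrm{floor}_D$ follows directly from Theorem~\ref{thm:floor}. Computability is then a separate finiteness argument.

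\emph{Orbit invariance.} Write a phase-gadget target as a phase polynomial $\sum_{f} a_f\,\pi/4\,(f\cdot x)$ followed by an affine linear map $A\in\mathrm{GL}(3,\mathbb{F}_2)$ with a translation part. The demand is the pair consisting of the skeleton $A$ and the odd-support syndrome $\{f\neq 0: a_f \text{ odd}\}$, which is the content of Lemma~\ref{lem:ledger}. I would check the two kinds of dressing one at a time.
\begin{enumerate}
\item A Pauli $X$ frame acts as an affine shift $x\mapsto x\oplus b$. This sends $(f\cdot x)$ to $(f\cdot x)\oplus(f\cdot b)$, which only flips signs of phases and adds a global phase, so odd coefficients stay odd. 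The $Z$ frames contribute only even ($\pi$) deposits, so the syndrome is unchanged.
\item The per-wire Hadamard boundary gauges are quotiented out by construction, because canonicalization fixes a representative per orbit.
\end{enumerate}
Hence $D$ is well-defined on $[U]_F$.

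\emph{Over-approximation.} I would argue by induction on $k$ that any zero-branch realization by $k$ clean macro-cells within the phase-gadget family produces a ledger lying in $S(k)$. A macro-cell's rung CZs, conjugated by the wire Hadamards, act on computational-basis labels as CNOT-type $\mathrm{GL}(2,\mathbb{F}_2)$ maps on the active pair. Each measured slot's $\Abfk$ angle deposits a multiple of $\pi/4$ on the parity form currently carried by that wire. The menu permits \emph{any} allowed $\mathrm{GL}(2,\mathbb{F}_2)$ action per block and odd deposits on any carried form, so every actual schedule is dominated by some menu schedule with the same skeleton and the same odd support. Monotonicity $S(k)\subseteq S(k+1)$ follows by padding with identity blocks and zero deposits. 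This is the step where the paper's start convention must be used: it fixes which forms are carried before the first block, and it must agree with the registered witnesses.

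\emph{Main obstacle.} The hardest part is justifying that restricting to the phase-gadget family makes the wire-carried forms evolve purely linearly. Intermediate Hadamards inside a cell that do not cancel into the gauge would leave the family, so the realization class must exclude them. That exclusion is exactly why the floor is family-scoped rather than unconditional. I would handle it by defining $F$ so that each cell's non-boundary Hadamard pairs compose into the CZ-to-CNOT conjugation, and then verifying the cell factorization $R_zHR_zHR_z$ per wire.

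\emph{Finiteness.} Finiteness is routine. For each $k$, $S(k)$ is a finite union over $(\mathrm{GL}(2,\mathbb{F}_2)\times\{\text{pairs}\})^{2k}$ choices and subsets of the seven nonzero forms in $\mathbb{F}_2^3$. Since $D(U)$ is finite data, the minimum in Theorem~\ref{thm:floor} is found by increasing $k$. This search terminates whenever $D(U)$ lies in some $S(k)$, and it otherwise returns $+\infty$ once the reachable set of (skeleton, syndrome) pairs stabilizes. Stabilization is guaranteed because that set is finite and increasing.
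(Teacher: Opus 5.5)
Your proposal is correct and follows the paper's own route. Corollary~\ref{cor:n3} is read as an instance of Theorem~\ref{thm:floor}: orbit-invariance of the demand comes from Lemma~\ref{lem:ledger}, over-approximation comes from the fact that any gadget realization induces an assignment in the generous menu (block actions in $\mathrm{GL}(2,\mathbb{F}_2)$, odd deposits only on carried forms), and each test $D(U)\in S(k)$ is a finite enumeration. Your one addition, returning $+\infty$ once the reachable set stabilizes, is not justified by \emph{finite and increasing} alone, which gives eventual but not detectable stabilization; a plateau of the reachable (linear map, exposed forms) states is conclusive only because appending a clean cell acts on that set by a fixed monotone map, but this is moot anyway, since every demand becomes feasible in the menu for large $k$ and the paper itself only claims a capped search.
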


\noindent The production implementation certifies the floor, not necessarily
an executable replacement, for an arbitrary in-scope target:
\begin{enumerate}
\item \emph{Canonicalize}: search the $64$ per-wire Hadamard boundary-gauge
pairs for $G\,U\,G'$ in affine-monomial form (this automatically reduces
Toffoli and the Hadamard-dressed Grover blocks to $\CCZ$); report
out-of-scope honestly if none exists.
\item \emph{Extract requirements}, both Pauli-frame invariant: the linear
skeleton (which input parity each output wire carries) and the nonconstant
odd-support syndrome of Lemma~\ref{lem:ledger}.
\item \emph{Search the schedules}: for $k=0,1,2,\dots$, enumerate the generous
$k$-cell menu for an assignment realizing the skeleton while exposing every
parity in the syndrome. The first feasible $k$ is the certified floor, and the
feasible assignment is emitted as a synthesis hint rather than as an angle-level
witness.
\end{enumerate}
The public artifact reports the positive-cost floor battery using the clean
START=5 macro-cell convention up to its stated finite cap; cap failures are
reported as unresolved, not as no-go theorems.  Soundness is
Theorem~\ref{thm:floor}'s over-approximation direction: within the stated clean
phase-gadget family and macro-cell convention, any physical gadget realization
would appear in the generous menu, so infeasibility in that menu proves a lower
bound for that family. On the \texttt{FLOOR-3W} battery
$\{\mathrm{CNOT},\ \mathrm{CX}_{01}\mathrm{CX}_{21},\ tc,\ \CCZ,\
\mathrm{CCX}_{\mathrm{mid}},\ H^{\otimes 3}\CCZ\}$---where $tc$ denotes the seven-gate
CNOT${}+T$ Toffoli phase core---Corollary~\ref{cor:n3}'s search returns floors
$(1,1,2,3,3,3)$, matching the registered battery values. These are lower
bounds until a target-specific witness achieves them; Sec.~\ref{sec:ccz}
supplies that meeting point for the central $\CCZ$ witness and the registered
application witnesses.

\subsection{L3 Case Study: Phase-Gadget CCZ Witnesses and Optimality}\label{sec:ccz}

The three-wire BPBO layer uses parity-ledger certificates and registered
witnesses for the evaluated $\CCZ/\mathrm{CCX}$-class regions.  $\CCZ$ is the
representative case because the registered Toffoli and Grover-3 application
regions reduce to $\CCZ$ up to local layers. This case study gives the exact
clean-macro-cell value for $\CCZ$ within the certified CNOT+T phase-gadget
family~\cite{cowtan20} and an unconditional three-cell upper bound in the full brickwork model;
it does not prove a family-free two-cell no-go.

\subsubsection{Setting}
A \emph{macro-cell} is a nine-column window of the three-wire brickwork; its
internal rung schedule is fixed by the window's start column modulo $8$, and the
two \emph{clean} phases (start $\equiv 5,7 \bmod 8$) carry no boundary rung.
A \emph{$k$-cell realization} of a unitary $U$ assigns angles from $\Abfk$ to
$k$ consecutive clean macro-cells whose zero-branch map equals $P\,U$ for some
single-qubit Pauli tensor $P$ (the output \emph{frame}; admitted witnesses pass
the output-frame checks of Sec.~\ref{sec:security}). Within this case study, the
certified scope is the \emph{phase-gadget family}: realizations that factor as
CNOT+T gadgets---monomial blocks at the rungs with diagonal phase deposits on
the parity forms carried by the wires. The registered production witnesses
reported here lie in this family; realizations outside it are addressed by
adversarial search (Remark~\ref{rem:scope}).
Column counts use four conventions.  A macro-cell window has nine local
columns including its boundary; each macro-cell contributes eight
angle-carrying measured columns; connected patch columns merge shared
boundaries, so three start-$5$ macro-cells form a $3\times25$ connected patch
with $72$ measured qubits; and period columns count the $24$-column alignment
period used when Grover blocks are concatenated.

\subsubsection{The requirement and the obstruction}
By the parity-ledger analysis of Sec.~\ref{sec:certificates}, every phase-gadget
realization of $\CCZ$ modulo a single-qubit Pauli frame must deposit odd
(T-family) phases on \emph{all seven} parity forms---the deposit syndrome is a
representation-independent invariant (Lemma~\ref{lem:ledger},
App.~\ref{app:proofs})---and in particular on the three $x_0$-containing
parities $x_0{\oplus}x_1$, $x_0{\oplus}x_2$, and
$x_0{\oplus}x_1{\oplus}x_2$, at moments when those forms are physically
carried by a wire. Whether a window can \emph{schedule} those deposits is a
finite question about its fixed rung alternation, and for two cells the
answer is negative:

\begin{theorem}[Phase-gadget two-cell no-go]\label{thm:nogo}
Within the phase-gadget family, no clean two-macro-cell window (start
$\equiv 5$ or $7 \bmod 8$) realizes $\CCZ$ modulo a single-qubit Pauli output
frame.
\end{theorem}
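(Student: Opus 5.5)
The plan is to instantiate Theorem~\ref{thm:floor} with the parity-ledger demand of Lemma~\ref{lem:ledger} and the generous two-cell menu of Lemma~\ref{lem:coverage}, and to show that this demand is not contained in $S(2)$. The demand comes first. Since $\CCZ$ is already diagonal and in affine-monomial form, its skeleton is the identity map on $(x_0,x_1,x_2)$. Its phase polynomial $\tfrac{\pi}{4}\bigl(x_0+x_1+x_2-(x_0\oplus x_1)-(x_0\oplus x_2)-(x_1\oplus x_2)+(x_0\oplus x_1\oplus x_2)\bigr)$ has odd coefficients on all seven nonzero parity forms. By Lemma~\ref{lem:ledger}, this odd-support syndrome is invariant under the Pauli output frame and under the boundary gauge. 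Hence every phase-gadget realization must, at some moment, deposit an odd phase on each of the seven parities, and in particular on the three $x_0$-containing nonsingleton parities $x_0\oplus x_1$, $x_0\oplus x_2$, and $x_0\oplus x_1\oplus x_2$.

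Next comes the supply side. The two clean start conventions ($5$ and $7 \bmod 8$) are treated separately. For each one, write out the fixed rung alternation over the $2k=4$ two-rung blocks of the two macro-cells, recording which wire pair, $(0,1)$ or $(1,2)$, each block acts on. Every block is over-approximated by an arbitrary $\mathrm{GL}(2,\mathbb{F}_2)$ action on its pair. Between blocks, the set of parities carried by the wires is a state in $\mathrm{GL}(3,\mathbb{F}_2)$, and deposits may be placed on any form carried at any slot. So the generous menu is a finite reachability problem: each block has $|\mathrm{GL}(2,\mathbb{F}_2)|=6$ choices, giving $6^4$ schedules per start convention. For each schedule I collect the set of carried forms and ask two questions: does it contain all seven forms, and does the final linear map equal the identity (up to the permitted gauge)?

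The key step, and the main obstacle, is to rule out every such schedule by a structural argument rather than by brute force. Wire $0$ interacts only through blocks on the pair $(0,1)$. A form containing $x_0$ together with $x_2$ can appear only after $x_2$-information has travelled to wire $1$ through a $(1,2)$ block and then reached wire $0$ or wire $1$ through a later $(0,1)$ block. Meanwhile, wire $0$ must be returned to carry $x_0$ at the end, which needs yet another $(0,1)$ block. I would count the alternations available in each start phase and show that at most two of the three required $x_0$-containing parities can be exposed before the final block needed to restore the identity skeleton. If this counting argument turns out to be delicate for one start phase, I would fall back on the exhaustive enumeration of the $6^4$ menu, which is exactly the finite search of Corollary~\ref{cor:n3} at $k=2$; the battery value of $3$ for $\CCZ$ already indicates that it is infeasible.

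Finally, infeasibility in the over-approximating menu gives $\mathrm{floor}_D(\CCZ)>2$ through Theorem~\ref{thm:floor}. The Pauli output frame and the boundary gauges are absorbed by the orbit-invariance of $D$, so no clean two-macro-cell window realizes $\CCZ$ within the phase-gadget family, which is the statement of the theorem.
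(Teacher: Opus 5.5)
Your top-level architecture matches the paper's. Lemma~\ref{lem:ledger} forces odd deposits on $x_0\oplus x_1$, $x_0\oplus x_2$ and $x_0\oplus x_1\oplus x_2$. Lemma~\ref{lem:coverage}, an exhaustive enumeration of an over-approximating $\mathrm{GL}(2,\mathbb{F}_2)$ menu, shows that no identity-returning clean two-cell schedule covers all three. The trouble is that the menu you write down yourself is not a sound over-approximation, and the structural step you make central fails at one of the two starts.

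\emph{Start $7$ is not four two-rung blocks.} A start-$7$ cell has lone $(1,2)$ rungs at relative columns $1$ and $7$, with the $(0,1)$ pair $\{3,5\}$ between them. At the junction the two lone $(1,2)$ rungs merge. The two-cell schedule is therefore single--double--double--double--single on $(1,2),(0,1),(1,2),(0,1),(1,2)$, with menu $3\cdot 6^3\cdot 3$, not $6^4$. A four-block alternation drops an entangling block, so physical start-$7$ schedules do not embed in it, and infeasibility there proves nothing. The \texttt{FLOOR-3W} value $3$ cannot fill this in either, because that battery is computed at the START$=5$ convention.

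\emph{Deposits must be allowed on the whole two-form span.} For each block's wire pair carrying forms $a,b$, deposits must be allowed on all of $\{a,b,a\oplus b\}$, not only on forms present at block boundaries. An identity-acting block built from two CNOT rungs carries $a\oplus b$ between its rungs. If you track only one $\mathrm{GL}(2,\mathbb{F}_2)$ element per block, as your six-choices count suggests, the menu under-approximates deposit opportunities, and again infeasibility does not transfer.

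With the correct menu, your counting heuristic says that $x_2$-information must be routed into a $(0,1)$ block and the last $(0,1)$ block must restore wire $0$, so $x_0\oplus x_2$ and $x_0\oplus x_1\oplus x_2$ cannot both be exposed. This is right at start $5$. There the final $(0,1)$ block must carry a basis of $\mathrm{span}\{x_0,x_1\}$, and a short case analysis on what wire $1$ carries after the first block finishes the argument.

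The heuristic is false at start $7$. The leading lone $(1,2)$ rung can put $x_1\oplus x_2$ on wire $1$ before the first $(0,1)$ block, and the trailing one can clean wire $1$ after the last. In mask notation, take the assignment $w_1\mathrel{\oplus}=w_2$, $\mathrm{SWAP}_{01}$, $\mathrm{id}_{12}$, $\mathrm{SWAP}_{01}$, $w_1\mathrel{\oplus}=w_2$. It runs $(1,2,4)\to(1,6,4)\to(6,1,4)\to(6,1,4)\to(1,6,4)\to(1,2,4)$, so it returns to the identity. It exposes $\{5,6,7\}$, since the third block's span contains $1\oplus 4=5$. This is exactly one of the start-$7$ coverable sets listed in Lemma~\ref{lem:coverage}.

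At start $7$, then, the form that gets missed is $x_0\oplus x_1$, and the ``at most two of three'' conclusion needs a different mechanism. The paper attempts no structural argument. It relies on the finite enumeration, with $13$ and $28$ identity-returning assignments at starts $5$ and $7$. Since you already cite Lemma~\ref{lem:coverage}, the repair is to use it exactly as stated instead of your $6^4$-per-start, boundary-only menu.

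A minor point: Lemma~\ref{lem:ledger} gives invariance under Pauli output frames only, not under Hadamard boundary gauges. The theorem involves no gauge, so drop that clause.
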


The mechanism is geometric (Fig.~\ref{fig:obstruction}): the wire $x_0$ mixes
only at $(0,1)$-blocks, and a two-cell window alternates blocks too coarsely to
expose the outer pair $x_0{\oplus}x_2$ together with the remaining required
parities and still return the wires to identity. The proof is a finite
enumeration ($6^4$ and $3\cdot 6^3\cdot 3$ assignments) over an
\emph{over-approximated} menu, which is what makes the bound sound
(App.~\ref{app:proofs}).

\begin{remark}[Scope]\label{rem:scope}
Outside the gadget family---arbitrary-angle windows whose internal dressing
does not factor as CNOT+T gadgets---the statement rests on adversarial search:
approximately $1.8\times 10^{5}$ randomized and structured two-cell candidate windows
cap at frame fidelity $0.730$ (start $5$) and $0.854$ (start $7$) against
$\CCZ$. We state Theorem~\ref{thm:nogo} as a \emph{bounded} no-go and do not
claim a family-free impossibility.
\end{remark}

\begin{theorem}[Three-cell witness]\label{thm:witness}
There exist angles in $\Abfk$ on three consecutive clean start-$5$ macro-cells
whose composite equals $(Y{\otimes}X{\otimes}Z)\cdot\CCZ$ exactly. The witness
(schedule $(\mathrm{rotA},\mathrm{CXb})^{\times 3}$: in each cell the
$(1,2)$ block acts by local rotations and the $(0,1)$ block carries the
CNOT; full angle tables in
App.~\ref{app:witness}) is verified \emph{exactly}: with all angles integer
multiples of $\pi/4$, the composite cell map lies in
$\mathbb{Z}[\zeta_8]^{8\times 8}$, and its proportionality to
$(Y{\otimes}X{\otimes}Z)\cdot\CCZ$ is a division-free cross-multiplication
identity in cyclotomic integer arithmetic---no floating point enters the
certificate. Independent floating-point reconstruction agrees (elementwise
deviation below $10^{-15}$), and the witness was
cross-verified by a second, independently implemented toolchain. The witness is
itself a phase-gadget realization---CNOT+T blocks with boundary and diagonal
deposits---in the same family as Theorem~\ref{thm:nogo}, so the lower and upper
bounds meet within a single realization class.
\end{theorem}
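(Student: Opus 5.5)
The statement is existential and exact, so I will prove it constructively: exhibit the angle tables and certify equality symbolically. First I fix the cell model. A clean start-$5$ macro-cell is eight measured columns on three wires, with rung positions determined by the period-$8$ stagger. At the zero-branch level, each measured qubit with angle $\phi$ contributes a factor $H R_z(-\phi)$ on its wire, and each vertical rung contributes a $\CZ$ between the adjacent wires at that column. I write the composite map of three consecutive cells as an explicit ordered product of these $8\times 8$ factors, with boundary columns merged as in the $3\times25$ patch convention.

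I then choose angles following the schedule $(\mathrm{rotA},\mathrm{CXb})^{\times3}$. In each cell:
\begin{enumerate}
\item On the $(0,1)$ block, I set the angles so the two rungs sandwiched by Hadamard-gauge columns on the target wire compose to a $\mathrm{CNOT}$ ($\CZ$ conjugated by $H$).
\item On the $(1,2)$ block, I set the angles so the rungs, flanked by angle-$0$ and angle-$\pi/2$ Hadamard-chain columns, collapse to local rotations up to Pauli byproducts.
\item In the remaining measured slots, which act as diagonal $R_z$ deposits, I place $T$ and $T^\dagger$ phases on whatever parity form each wire carries at that moment.
\end{enumerate}
For the deposits, I use the phase-polynomial form $\CCZ=\exp\bigl(i\tfrac{\pi}{8}\sum_{\emptyset\ne S\subseteq\{0,1,2\}}(-1)^{|S|+1}(-1)^{\oplus_{i\in S}x_i}\bigr)$, so every one of the seven parities must get an odd multiple of $\pi/4$, as Lemma~\ref{lem:ledger} requires. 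The three $\mathrm{CNOT}$ blocks, interleaved with the local rotations on $(1,2)$, must walk the wires through parity states that expose $x_0\oplus x_1$, $x_0\oplus x_2$ and $x_0\oplus x_1\oplus x_2$, and then return the linear skeleton to the identity. I find this assignment by running the Corollary~\ref{cor:n3} schedule search at $k=3$, which is feasible by the reported floor value $3$ for $\CCZ$. I then lift the resulting hint to concrete angles in $\Abfk$.

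For certification, every factor has entries in $\tfrac{1}{\sqrt2}\mathbb{Z}[\zeta_8]$, and $\sqrt2=\zeta_8+\zeta_8^{-1}\in\mathbb{Z}[\zeta_8]$. After clearing the global power of $\sqrt2$, the composite $M$ therefore lies in $\mathbb{Z}[\zeta_8]^{8\times8}$. With $T=(Y\otimes X\otimes Z)\CCZ$, I verify proportionality $M\propto T$ by choosing a pivot entry with $T_{ab}\neq0$ and checking $M_{ij}T_{ab}=M_{ab}T_{ij}$ for all $i,j$. These are exact polynomial identities reduced modulo $\Phi_8(x)=x^4+1$, and they need no division. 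Unitarity of $M$, which follows because it is a product of unitaries, then rules out a zero scalar. The Pauli frame $Y\otimes X\otimes Z$ is read off from the byproducts accumulated by the $\pi/2$-type columns.

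Membership in the phase-gadget family is argued structurally from the chosen schedule: each rung block acts as a monomial (CNOT or local) map, and every other slot is a diagonal deposit. That places the witness in the same class as Theorem~\ref{thm:nogo}.

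The main obstacle is the lifting step. Feasibility in the over-approximated menu does not guarantee an angle-level realization, because the fixed rung stagger can force unwanted Hadamards or byproducts onto the wrong wire. If a direct lift fails, I would search exhaustively over $\Abfk$ angles only in the local-rotation and deposit slots, with the CNOT slots held fixed. I would prune using the Clifford tableau quotient to check the linear skeleton first, and only then check the diagonal phases.
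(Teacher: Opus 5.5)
Your certification step is the same as the paper's: clear the powers of $\sqrt2$, work in $\mathbb{Z}[\zeta_8]$, and check $M_{ij}T_{ab}=M_{ab}T_{ij}$ entrywise for $T=(Y\otimes X\otimes Z)\CCZ$. The gap is everything before it. The statement is existential, and in the paper the existence \emph{is} the printed angle table (Table~\ref{tab:witness}). The proof runs your exact check on that table, together with a floating rebuild, the alphabet check, and Theorem~\ref{thm:p2} for the branches. Your proposal never produces a table, and the route you describe does not show that one exists.

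Feasibility of the generous menu at $k=3$ (Corollary~\ref{cor:n3}) is only a schedule hint over an over-approximation. The paper stresses that $D(U)\in S(k)$ gives no angle-level witness. Middle-target $\mathrm{CCX}$ is a live case: its floor is $3$, yet the search finds no three-cell closure and falls back to four cells.

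Your slot-by-slot placement of $\pm\pi/4$ deposits also ignores inter-cell residues. Pauli byproducts and Hadamard boundary gauges from earlier blocks conjugate later deposits: $XR_z(t)X=e^{it}R_z(-t)$ flips a $\pi/4$ deposit to $-\pi/4$, and $H$ changes which form a wire carries. This is why independently correct cells compose only to fidelity $\cos(\pi/4)$. What makes the paper's search close is frame-chained synthesis (Lemma~\ref{lem:chaining}). Each cell $j$ is solved against $P_jG_jT_jR_{j-1}^{\dagger}$, the gauge masks $G_j$ are searched, and the last cell is kept end-safe. Your exhaustive fallback is no substitute, because several dozen free slots over an eight-letter alphabet are far beyond enumeration.

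Step~2 also fails as written. Suppose every $(1,2)$ block is frozen at a setting that collapses to local rotations, with deposits only in the remaining slots. Then every entangling operation acts inside $\{0,1\}$, and the composite factors as $U_{01}\otimes V_2$. Equivalently, no wire ever carries $x_1\oplus x_2$, $x_0\oplus x_2$ or $x_0\oplus x_1\oplus x_2$. But $(Y\otimes X\otimes Z)\CCZ$ has operator Schmidt rank $2$ across $\{0,1\}|\{2\}$, and Lemma~\ref{lem:ledger} requires odd deposits on exactly those forms. So your claim that the $(0,1)$ CNOTs ``interleaved with the local rotations'' can expose them is false. The parities mixing $x_2$ with other wires must be exposed inside the $(1,2)$ blocks. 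That happens either through a nontrivial $\mathrm{GL}(2,\mathbb{F}_2)$ action or through a deposit in the block's interior two-form span, as in Lemma~\ref{lem:coverage}, so those slots must be part of the search.

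To repair the proof, either take the printed table and run your cyclotomic check on it, or give a synthesis argument (for instance via Lemma~\ref{lem:chaining}) that provably outputs the angles.
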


\begin{corollary}\label{cor:three}
Within the CNOT+T phase-gadget family, the clean-macro-cell complexity of
$\CCZ$ on the three-wire brickwork is exactly three
(Theorems~\ref{thm:nogo} and~\ref{thm:witness}). For unrestricted-angle
windows we prove the three-cell \emph{upper} bound (the witness is
unconditional) and report adversarial evidence against two cells
(Remark~\ref{rem:scope}); we do not claim a family-free lower bound.
\end{corollary}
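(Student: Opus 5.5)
The corollary follows by assembling the two preceding theorems under one realization class; the only real work is making sure the two bounds refer to the same quantity. Let $k^{\mathrm{pg}}_{\min}(\CCZ)$ be the least number of consecutive clean macro-cells on which a phase-gadget realization of $\CCZ$ exists, modulo a single-qubit Pauli output frame and the permitted boundary gauge. Here ``phase-gadget realization'' means one that factors as monomial CNOT blocks at the rungs with diagonal $\Abfk$ deposits on the parity forms carried by the wires. I will show $k^{\mathrm{pg}}_{\min}(\CCZ)\ge 3$ and $k^{\mathrm{pg}}_{\min}(\CCZ)\le 3$ separately.

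\emph{Lower bound.} I need to rule out $k=0,1,2$.
\begin{itemize}
\item For $k=2$, this is exactly Theorem~\ref{thm:nogo}, applied to both clean start phases ($5$ and $7 \bmod 8$).
\item For $k\le 1$, I would invoke Theorem~\ref{thm:floor} with the L3 demand of Corollary~\ref{cor:n3}. By Lemma~\ref{lem:ledger}, the syndrome of $\CCZ$ requires odd deposits on all seven nonzero parity forms.
\item At $k=0$ no rung acts, so only the three singleton forms are ever carried.
\item At $k=1$, the generous one-cell menu carries too few composite parities while still returning the linear skeleton to the identity. This is the same enumeration as Lemma~\ref{lem:coverage}, truncated to one cell, and it is consistent with the reported floor value $3$ for $\CCZ$ in the \texttt{FLOOR-3W} battery.
\end{itemize}
Alternatively, I would argue monotonicity: a one-cell realization could be padded by an identity-acting clean cell to a two-cell realization inside the family, contradicting Theorem~\ref{thm:nogo}. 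This requires checking that a clean macro-cell admits an identity angle assignment within the phase-gadget family. It does: all angles are $0$, the blocks act trivially up to Hadamard pairs, and the result is absorbed modulo the frame. I would state this padding check explicitly, since it is the one step not literally covered by the cited statements.

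\emph{Upper bound.} Theorem~\ref{thm:witness} gives $\Abfk$ angles on three start-$5$ clean macro-cells realizing $(Y\otimes X\otimes Z)\cdot\CCZ$. This is $\CCZ$ modulo a single-qubit Pauli frame, so it meets the definition of a $3$-cell realization. The theorem also records that the witness is itself a phase-gadget realization, so it lies in the same class as the lower bound, and hence $k^{\mathrm{pg}}_{\min}(\CCZ)=3$.

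\emph{Unrestricted-angle statements.} The same witness is a valid realization in the unrestricted model, since that class contains the gadget family; this gives the unconditional upper bound of three. The absence of a family-free lower bound is not a mathematical claim. It is a scoping statement pointing to Remark~\ref{rem:scope}, whose fidelity caps of $0.730$ and $0.854$ are evidence rather than proof.

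The main obstacle is consistency of conventions between the two bounds: the same frame quotient, the same start phase, and the same clean-cell unit. In particular, the witness uses start $5$, while the no-go covers both starts. For the ``exactly three'' claim this mismatch is harmless, because the lower bound holds for every clean start and the upper bound needs only one.
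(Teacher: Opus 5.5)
Your proposal is correct and takes the paper's route: the corollary is the conjunction of the phase-gadget two-cell no-go (Theorem~\ref{thm:nogo}) with the three-cell witness (Theorem~\ref{thm:witness}), which the paper stresses lies in the same CNOT+T phase-gadget class, together with the scoping remark for unrestricted windows. Your explicit treatment of $k\le 1$ fills in a step the paper leaves implicit, and your padding argument is sound: an all-zero clean cell is the identity, because its rung CZs sit at odd relative columns, are separated by even numbers of Hadamard hops, and cancel pairwise, while the eight hops multiply to $I$. The padding argument also usefully covers start $7$, where the \texttt{FLOOR-3W} value $3$ (reported under the start-$5$ convention) does not directly apply.
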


\subsubsection{How the witness is constructed: two synthesis principles}
The witness is not a lucky search result; it is produced by a reusable
procedure built on two principles.

\begin{lemma}[Frame-chained synthesis]\label{lem:chaining}
Let $T_1,\dots,T_k$ be per-cell targets with $T_k\cdots T_1 = U$. Solve the
cells sequentially against \emph{residue-adapted} targets: cell $j$ realizes
$P_j\,G_j\,T_j\,R_{j-1}^{\dagger}$ exactly, where $P_j$ is a single-qubit
Pauli output frame,
$R_{j-1}=(U_{j-1}\cdots U_1)(T_{j-1}\cdots T_1)^{\dagger}$ is the exact
accumulated residue, $G_j$ a per-wire Hadamard boundary gauge, and $G_k$
trivial. Then $U_k\cdots U_1 = P\,U$ with $P$ the final cell's Pauli frame, by
construction. The gauge masks $G_j$ are a search dimension: greedy choices can
dead-end, and existence of a closing assignment is established per schedule.
\end{lemma}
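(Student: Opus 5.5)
The plan is an induction on $j$ for a sharper invariant than the final identity. Define $R_j=(U_j\cdots U_1)(T_j\cdots T_1)^{\dagger}$ with $R_0=I$. The first step is to unfold the construction rule. Cell $j$ is chosen so that its exact map is
\[
U_j=P_j\,G_j\,T_j\,R_{j-1}^{\dagger}.
\]
Multiplying on the right by the accumulated product $U_{j-1}\cdots U_1 = R_{j-1}\,T_{j-1}\cdots T_1$ cancels $R_{j-1}^{\dagger}R_{j-1}$. This gives
\[
U_j\cdots U_1 = P_j\,G_j\,T_j\,T_{j-1}\cdots T_1 .
\]
The cancellation needs $R_{j-1}$ to be unitary. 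That holds because it is a product of unitaries: the realized cell maps at the zero branch, and the targets.

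The resulting identity is itself the induction invariant, so $R_j=P_jG_j$. In words, the residue after cell $j$ is always a Pauli-times-Hadamard-gauge dressing. It is never an uncontrolled unitary error. This also confirms that each residue-adapted target $P_jG_jT_jR_{j-1}^{\dagger}$ is well defined from data already fixed by the earlier cells. The argument needs no commutation between $P_j$, $G_j$ and later targets, since the whole dressing is absorbed into the next cell's target through $R_{j-1}^{\dagger}$.

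At $j=k$, set $G_k=I$ and use $T_k\cdots T_1=U$. This yields $U_k\cdots U_1=P_k\,U$. The frame $P=P_k$ is a single-qubit Pauli tensor, as required by the $k$-cell realization definition of Sec.~\ref{sec:ccz}. If the convention fixes frames only up to global phase, I would carry that phase through the cancellation, where it is harmless.

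The main obstacle is that the lemma is stated conditionally on each cell realizing its adapted target \emph{exactly}. The algebra above is trivial once that hypothesis holds; the real content is existence. That is why the statement says only that gauge masks are a search dimension and that closure is established per schedule. I would make this explicit in the proof: the lemma certifies correctness of any completed chain, and does not guarantee that a chain exists. For Theorem~\ref{thm:witness}, the witness schedule $(\mathrm{rotA},\mathrm{CXb})^{\times 3}$ discharges the existence hypothesis directly, via the exact cyclotomic check of the composite.
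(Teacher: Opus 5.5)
Your proposal is correct and follows the paper's own proof. Both unfold $U_j=P_jG_jT_jR_{j-1}^{\dagger}$, use $U_{j-1}\cdots U_1=R_{j-1}(T_{j-1}\cdots T_1)$ to cancel the residue and obtain the normal form $R_j=P_jG_j$, set $G_k=I$ to get $U_k\cdots U_1=P_k\,U$, and leave the existence of a closing gauge assignment to the per-schedule search. (In both versions the ``induction'' is really a direct computation that holds for each $j$ from the definition of $R_{j-1}$, so nothing further is needed.)
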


Chaining is the synthesis-time instance of the byproduct-adaptation rule proved
in Sec.~\ref{sec:security}: each cell's residue is absorbed into the next
cell's target exactly as runtime byproducts are absorbed into future
measurement angles. Without it, individually perfect cells compose to fidelity
$\cos(\pi/4)\approx 0.707$---the frames of earlier cells flip the phase
deposits of later ones.

The second principle is a scheduling rule: a cell's trailing boundary always
emits per-wire Hadamard gauge that the \emph{next} cell absorbs, so a block
whose linear action is iSWAP-class (a same-pair double-CNOT ``rotation'')
cannot occupy the final cell, whose outgoing gauge must be trivial. Re-running
the schedule enumeration under this \emph{end-safe} constraint yields four
candidate schedules, all of which then solve at fidelity $1.0$; the rule's
mechanism is established by a Makhlin-invariant scan~\cite{makhlin02} of the two-rung block
family, and we use it as a verified design rule rather than a closed-form
theorem.

The same machinery yields, beyond bare $\CCZ$: the Grover block
$B=H^{\otimes 3}\,\CCZ$ as a three-cell witness (the diffusion Hadamard layer
absorbed into the boundary gauge), and, by composing four chained blocks, a
complete twelve-cell Grover-3 reference whose physical correctness is measured
in Sec.~\ref{sec:results}. Because each block spans $24\equiv 0 \pmod 8$
columns, consecutive blocks preserve the stagger phase and concatenate with
zero alignment padding.
The endpoint-target $\mathrm{CCX}$/Toffoli witness is the other runtime-facing
L3 application: it closes at the three-cell floor for target wire $x_2$, while
the fixed middle-target placement remains a registered four-cell fallback
because the three-row brickwork geometry is not row-symmetric.  Sec.~\ref{sec:results}
and App.~\ref{app:witness} separate these two application witnesses explicitly.

\begin{figure}[t]
\centering
\resizebox{\columnwidth}{!}{%
\begin{tikzpicture}[font=\scriptsize, x=1mm, y=1mm,
  blk/.style={draw, rounded corners=1.5pt, fill=black!12, inner sep=0pt,
              minimum width=8.5mm, minimum height=10.5mm},
  blkz/.style={blk, fill=white, densely dashed}]
\draw[very thick] (0,0)   node[left] {$x_0$} -- (76,0);
\draw (0,-7)  node[left] {$x_1$} -- (76,-7);
\draw (0,-14) node[left] {$x_2$} -- (76,-14);
\node[blk] (b1) at (8,-10.5)  {$B_{(1,2)}$};
\node[blk] (b2) at (20,-3.5)  {$B_{(0,1)}$};
\node[blk] (b3) at (32,-10.5) {$B_{(1,2)}$};
\node[blk] (b4) at (44,-3.5)  {$B_{(0,1)}$};
\node[blkz] (b5) at (56,-10.5) {$B_{(1,2)}$};
\node[blkz] (b6) at (68,-3.5)  {$B_{(0,1)}$};
\draw[decorate, decoration={brace, amplitude=3pt}]
  (2,4) -- (50,4) node[midway, above=3pt] {two clean macro-cells};
\draw[decorate, decoration={brace, amplitude=3pt}]
  (52,4) -- (74,4) node[midway, above=3pt] {third cell};
\draw[{Stealth[length=1.5mm]}-, black!60] (b2.south) ++(0,-1.5)
  -- ++(0,-4.5) coordinate (m1);
\draw[{Stealth[length=1.5mm]}-, black!60] (b4.south) ++(0,-1.5)
  -- ++(0,-4.5) coordinate (m2);
\node[black!60, below=0mm of m1, xshift=12mm]
  {$x_0$ mixes \emph{only} here: two slots};
\node[align=left, anchor=north west] at (0,-19)
  {ledger demand: odd deposits on $x_0{\oplus}x_1$,\;
   $x_0{\oplus}x_2$,\; $x_0{\oplus}x_1{\oplus}x_2$\\
   identity-returning schedules cover $\le 2$ of $3$;\;
   the dashed cell adds the missing exposure};
\end{tikzpicture}}
\caption{L3 phase-gadget obstruction for two clean macro-cells. A clean
two-macro-cell window alternates two-rung blocks on wire pairs $(1,2)$ and
$(0,1)$ in a fixed order set by the stagger phase. Tracking which parity forms
each wire carries through the block sequence, the three $x_0$-containing
parities
required by $\CCZ$'s ledger can never all visit a wire while the schedule
returns the wires to identity: every identity-returning assignment covers at
most two of $\{x_0{\oplus}x_1,\,x_0{\oplus}x_2,\,x_0{\oplus}x_1{\oplus}x_2\}$.
A third cell adds the missing exposure, and Theorem~\ref{thm:witness} provides
the corresponding L3 witness.}
\label{fig:obstruction}
\end{figure}

\section{Admission, Blindness, and Verification Compatibility}\label{sec:security}

A rewrite certified at the zero-branch level (Sec.~\ref{sec:ccz}) is not yet a
protocol participant. Four gaps remain between a frame/gauge-correct unitary
and an admissible pattern: measurement outcomes are random, the full output
decoder must compose all Pauli-frame sources, the angles are subsequently
blinded, and the protocol may interleave verification rounds. This section
states the protocol-level admission conditions that close these gaps---promoting
a witness from ``correct'' to ``admissible in the running protocol'' without
introducing a new security model.

\begin{theorem}[Branch-frame closure]\label{thm:p2}
Consider a clean pattern of $N$ measured columns with unblinded base angles
$\alpha_{r,c}\in\Abfk$.  The client applies the deterministic sign adaptation
$(-1)^{x_r}\alpha_{r,c}$, while the tracker bits $(x_r,z_r)$ per wire evolve,
per column, by (i)~$z_r \mathrel{\oplus}= s_{r,c}$ for outcome $s_{r,c}$,
(ii)~$z_a \mathrel{\oplus}= x_b,\; z_b \mathrel{\oplus}= x_a$ for each rung
$(a,b)$ in the column, and
(iii)~$(x_r,z_r)\leftarrow(z_r,x_r)$ at the column hop. Then for
\emph{every} outcome string $s$, the branch map equals $P_{\rm br}(s)\,U_0$
up to global phase, where $U_0$ is the zero-branch map and
$P_{\rm br}(s)=\prod_r X^{x_r}Z^{z_r}$ is the tracker's final Pauli. Moreover
$P_{\rm br}(s)$ is client-computable from $s$ and the public schedule, and all
adapted base angles remain in $\Abfk$.
\end{theorem}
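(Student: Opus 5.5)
Write each measured step of the brickwork pattern as an elementary one-wire teleportation. Consider measuring qubit $(r,c)$ in basis $M^{\alpha}$ after the CZ to $(r,c{+}1)$. Outcome $s$ implements $X^{s}H R_z(-\alpha)$ on the logical wire state, up to global phase. A vertical rung between wires $a,b$ in the column contributes a CZ acting on the current logical states.

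The proof is an induction on the columns. The invariant is: after processing columns $1,\dots,c$, the branch state equals $\bigl(\prod_r X^{x_r}Z^{z_r}\bigr)\,U_0^{(c)}$, where $U_0^{(c)}$ is the zero-branch prefix map. The base case $c=0$ is trivial, with all tracker bits zero.

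For the inductive step, I push the accumulated Pauli $P=\prod_r X^{x_r}Z^{z_r}$ through one column's operations, in this order.

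\emph{First, the rotation.} The rotation $R_z(\pm\alpha)$ commutes with $Z$. Passing $X$ through it flips the sign of the angle: $R_z(\alpha)X = X R_z(-\alpha)$ up to phase. Hence measuring at the adapted angle $(-1)^{x_r}\alpha_{r,c}$ yields exactly the zero-branch rotation with $P$ pushed out unchanged. This is the sign adaptation in the statement.

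\emph{Second, the outcome.} The fresh outcome byproduct contributes a new Pauli factor. Written before the Hadamard, it is a $Z^{s}$; this gives rule (i), $z_r\oplus= s_{r,c}$.

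\emph{Third, the rungs.} The rungs are CZ gates. Conjugation gives $\mathrm{CZ}\,X_a\,\mathrm{CZ}=X_aZ_b$, while $Z$ commutes with CZ. This gives rule (ii). Because all rungs in a column are disjoint, or at least commuting CZs, the order of applying (ii) within the column is irrelevant.

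\emph{Fourth, the hop.} The Hadamard at the column hop swaps $X\leftrightarrow Z$, which is rule (iii).

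I would verify the exact placement of the rungs relative to the outcome byproduct and the Hadamard, which is a bookkeeping convention. To pin down the order of (i)--(iii) within a column, I would write a single column as $\prod \mathrm{CZ}\cdot \bigotimes_r H R_z$ and check it against the clean-pattern convention used in Theorem~\ref{thm:witness}.

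The closing claims then follow directly.

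\emph{Client-computability.} The tracker updates are affine over $\mathbb{F}_2$ in the outcome bits $s$, with coefficients fixed by the public rung schedule. So $P_{\rm br}(s)$ is client-computable.

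\emph{Angles stay in the alphabet.} The map $\alpha\mapsto-\alpha$ preserves $\Abfk$, since $-k\pi/4\equiv(8-k)\pi/4$. So every adapted base angle stays in $\Abfk$.

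\emph{Global phases.} These arise from each commutation and are independent of later operations, so they factor out.

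The main obstacle I expect is the sign-adaptation step. The rule adapts only by $x_r$, and only $X$-type byproducts should affect the $R_z$ sign. The risk is that in the standard MBQC convention, $Z$ byproducts also enter the angle as an additive $\pi$ shift when the measurement is in the $XY$ plane after the Hadamard hop. I would first check that, with the statement's ordering, any $Z$ byproduct reaching a measurement commutes with $R_z$ and simply passes through as the (i)-type frame update. This is the reason rule (i) accumulates into $z_r$ rather than modifying the angle.

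If this check fails under some convention, I would fall back on re-expressing that $\pi$ shift as an $X$/$Z$ swap absorbed by rule (iii) at the next hop. Finally, I would confirm the invariant exhaustively on the elementary cells, as the paper's branch-enumeration artifacts already do, as a sanity check of the convention.
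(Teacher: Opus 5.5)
Your proposal is correct and follows essentially the paper's route. The paper also inducts over columns with the invariant (tracker Pauli)$\cdot$(zero-branch prefix map). It pushes the Pauli through the adapted $R_z$ via $R_z(t)X\propto XR_z(-t)$, picks up $Z^{s}$ from the branch flip, spreads $X_a\to X_aZ_b$ through each rung CZ, and swaps $X\leftrightarrow Z$ at the Hadamard hop. The ordering question you leave open is settled as in the paper. Group each column as one diagonal layer (phases, branch flips, rung CZs) followed by the hop. These diagonal operations commute and none of them changes the $x$-bits, so only rule (iii) has to come last, and your fallback for the $\pi$-shift is unnecessary.
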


Theorem~\ref{thm:p2} is stated so as to be its own implementation
specification: the tracker's update rules \emph{are} the runtime's classical
feed-forward, and its final Pauli \emph{is} the decoder's relabeling mask.
For a fixed branch $s$, the measurement projector is
$\bra{+_{(-1)^{x_r}\alpha_{r,c}+\pi s_{r,c}}}$; the $+\pi s$ term is branch
\emph{semantics} (the outcome-$1$ projector), not an online instruction.  This
is the standard MBQC feed-forward in an equivalent byproduct gauge: the usual
$(-1)^x\alpha+z\pi$ convention folds the tracked $z$-shift into the
transmitted angle, whereas we keep it in the classical frame.  The two gauges
produce identically distributed blinded transcripts and differ only by output
frame bookkeeping.
The same algebra serves twice---at compile time, the frame-chained synthesis
of Lemma~\ref{lem:chaining} absorbs the deterministic inter-cell residues
into the angle tables; at run time, the identical rules absorb the
measurement randomness. The implementation validates this branch closure as an
artifact gate (Sec.~\ref{sec:results} and App.~\ref{app:repro}); those
validation counts are evidence for the implementation, not an additional
security theorem. The decoder must accordingly compose every output-frame
term,
\[
F_{\rm out}(s)=F_{\rm br}(s)\oplus F_{\rm UBQC,out}\oplus
F_{\rm BPBO,static}.
\]
For computational-basis output, the $X$ components of this Pauli frame relabel
bit strings; the $Z$ components are phase-frame data unless coherent output or
non-$Z$ readout is requested.  Sec.~\ref{sec:results} shows the histogram
consequence of omitting a required term.

\begin{theorem}[Blindness at optimized leakage]\label{thm:blind}
Let an optimized pattern be produced by rewrites that encode no computation in
the topology: once the (permitted-leakage) dimensions $(n,m')$ are fixed, the
materialized graph is the \emph{standard} brickwork$(n,m')$, angles are sent in
the standard column order, any recycled-window schedule is a deterministic
function of $(n,m')$, and no rewrite metadata is sent to the server.  All
remaining computation dependence lies in the measurement angles, whose
adaptations are deterministic, client-computable, and $\Abfk$-valued. Under
UBQC blinding ($\delta=\phi'+\theta+r\pi$ with fresh uniform $\theta,r$ per
qubit), any two computations with the same admitted leakage $(n,m')$ induce the
same server-visible protocol-view distribution: materialized graph,
received-qubit ensemble, transmitted angle sequence, and outcome interaction
depend only on $(n,m')$. The only computation-dependent quantity intentionally
declared outside the angle one-time pad is the dimension $m'$ itself; size
hiding requires a bucket policy fixed independently of the hidden computation.
\end{theorem}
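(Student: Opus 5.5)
The proof will be a reduction to the original BFK09 blindness argument, applied at the dimensions $(n,m')$. The optimized pattern is treated as an ordinary UBQC computation on brickwork$(n,m')$ with base angles $\phi\in\Abfk$. The point to check is that every step of the BFK09 proof uses only the graph, the column order, and the fact that each adapted angle $\phi'$ lies in $\Abfk$ and is a deterministic function of the base angles and of earlier reported outcomes.

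The first step is to fix $(n,m')$ and list the components of the server view. Each component is then shown to depend only on $(n,m')$.
\begin{enumerate}
\item \emph{Graph and order.} The materialized graph is the standard brickwork$(n,m')$ and the angles are sent in standard column order. Both are therefore functions of $(n,m')$ alone.
\item \emph{Recycling schedule.} The recycled-window schedule is by hypothesis a deterministic function of $(n,m')$. By the exactness argument of Sec.~\ref{sec:platform}, it changes only allocation time and not the per-qubit data.
\item \emph{Metadata.} No optimizer metadata is sent, so no further component enters the view.
\end{enumerate}
The second step is to use Theorem~\ref{thm:p2}. It shows that the adapted angles $(-1)^{x_r}\alpha_{r,c}$ remain in $\Abfk$, and that the tracker depends only on the outcome string $s$ and the public schedule. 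So the client's online angle computation has exactly the form assumed in BFK09.

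The core step is the one-time-pad argument, carried out column by column by induction over the measurement order. Fix any history of earlier $\delta$'s and earlier outcomes. The received qubit is $\ket{+_\theta}$ and the announced angle is $\delta=\phi'+\theta+r\pi$, where $\theta$ is uniform on $\Abfk$ and $r$ is a fresh uniform bit independent of everything else. For any fixed $\phi'\in\Abfk$, the map $\theta\mapsto\phi'+\theta$ is a bijection of $\Abfk$. Hence, conditioned on $\delta$, the reduced state the server holds for that qubit is
\[
\tfrac12\left(\ket{+_{\delta-r\pi-\phi'+\phi'}}\bra{\cdot}\right)\text{ averaged over } r \;=\; \tfrac12 I,
\]
which is the BFK09 computation: given $\delta$, the qubit is $\ket{+_{\delta-r\pi}}$ with $r$ uniform. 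The joint distribution of $(\delta,\rho)$ is then uniform on $\delta$ times the maximally mixed state, independent of $\phi'$. Since $\phi'$ is a deterministic function of the base angles and of the outcomes already in the view, and $(\theta,r)$ are fresh, the induction hypothesis extends to the next qubit. The outcome interaction is handled by the standard BFK09 treatment, in which reported outcomes are XOR-padded by $r$. Hence the joint view, including an arbitrary malicious server's adaptive strategy as a CPTP map applied to that view, has a distribution depending only on $(n,m')$.

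The main obstacle will be making the induction airtight against adaptive deviation. I would handle it the way BFK09 does. First, purify the client's secrets $(\theta,r)$. Then show that the server's total input state (qubits plus the classical $\delta$ register) equals $\bigotimes_q \tfrac{1}{8}\sum_\delta\ket{\delta}\bra{\delta}\otimes I/2$, independent of $\phi$, after averaging over the pads. This uses that each adaptation $\phi'$ depends only on outcomes and is therefore absorbed into the relabelling of $\theta$ by the bijection above. The final clause of the statement then follows immediately: $m'$ is the only computation-dependent quantity outside the pad. A bucket policy independent of the computation makes the view depend only on the bucket, by the same argument applied at the padded dimensions.
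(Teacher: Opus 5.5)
Your proposal is correct and follows essentially the same route as the paper's proof. Both check the server view component by component at $(n,m')$, use the $\mathbb{Z}_8$ one-time pad to make each $\delta$ uniform (with Theorem~\ref{thm:p2} keeping $\phi'\in\Abfk$), average the antipodal pair over $r$ to get $I/2$ conditional on $\delta$ and the past transcript, iterate through the adaptive transcript, and then invoke the base BFK09 malicious-server theorem at the declared dimensions.

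One slip to fix: conditioned on $\delta$, the received qubit is $\ket{+_{\delta-\phi'-r\pi}}$, not $\ket{+_{\delta-r\pi}}$, because $\phi'$ does not cancel. Your conclusion is unaffected, since the two values of $r$ still give an antipodal pair whose average is $I/2$ for every $\phi'$.
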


With branch frames closed, blindness reduces to a transcript question: after
the client computes each outcome-dependent adapted angle, applies fresh UBQC
pads, and transmits the result, does the server see exactly a standard UBQC
transcript at the declared dimensions?  Theorem~\ref{thm:blind}'s hypothesis is
BPBO's design discipline restated: materialization produces only the standard
brickwork at the leaked dimensions, and selected-region data stay below the
blinding layer.  The server learns the declared optimized dimensions $(n,m')$;
within a fixed leakage class, blindness is inherited from the base protocol
after these admission conditions establish transcript equivalence. Optimizer
logs, witness identifiers, noncanonical timing records, retry behavior, and
artifact handles are not server-visible protocol messages unless a
particular implementation sends them; they are outside the transcript modeled
by Theorem~\ref{thm:blind}.

\begin{lemma}[Round indistinguishability and verification compatibility]\label{lem:v1}
Interleave optimized computation rounds with test rounds (every qubit a dummy
$\ket{z}$, $z$ uniform, or a trap $\ket{+_\theta}$, $\theta$ uniform),
\emph{provided}: (C1)~test rounds are generated at the optimized dimensions
$(n,m')$; (C2)~trap/dummy positions are drawn, conditional on $(n,m')$,
independently of the computation and of optimizer-internal metadata such as
deposit ledgers, witness choices, and selected-region plans; (C3)~both round
types use the standard brickwork$(n,m')$, the same canonical column order, and
any recycled-window schedule as a deterministic function of $(n,m')$; and
(C4)~both round types are blinded by the same mechanism, so the base protocol's
preparation indistinguishability applies. Then the server's pre-abort
protocol transcript is distributed as in the base test-round protocol. Under
C1--C4, BPBO outputs are compatible with the hypotheses of the underlying
test-round verification theorem; soundness remains the cited base theorem.
\end{lemma}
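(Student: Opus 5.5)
The plan is a hybrid argument over the server's pre-abort view. I would show that, once C1--C4 hold, the optimized interleaved protocol yields the same view as the base test-round protocol at the same dimensions $(n,m')$. I fix the leakage $(n,m')$ and the public round schedule, meaning the number of rounds and which rounds are tests. The server view then splits into four public parts: the graph, the column order, the recycled-window schedule, and the dimensions. Next come, per round, the ensemble of received qubits and the stream of $\delta$ values. Finally there are the server's outcomes together with the classical interaction conditioned on them.

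\textbf{Step 1 (public structure).} By C1 and C3, every round, computation or test, uses the standard brickwork$(n,m')$ and the canonical column order. Any window schedule is a fixed function of $(n,m')$. Hence the public part of the view is identical in the two protocols and carries no computation dependence. No optimizer metadata enters here; C2 and the hypothesis of Theorem~\ref{thm:blind} exclude it.

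\textbf{Step 2 (computation rounds).} For a single computation round, Theorem~\ref{thm:p2} gives that all adapted base angles lie in $\Abfk$ and are client-computable from past outcomes. Theorem~\ref{thm:blind} then gives that, for any server strategy, the joint law of the received states and the $\delta$ stream is the standard UBQC law depending only on $(n,m')$. The key point is the usual one-time-pad argument. Conditioned on the server's history, each $\delta=\phi'+\theta+r\pi$ is uniform on $\Abfk$, and the reduced state of the qubit the server holds is maximally mixed given $\delta$. I would invoke this argument rather than rederive it.

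\textbf{Step 3 (test rounds and mixing).} Test rounds are built by the base protocol at $(n,m')$ (C1). Their dummy and trap positions are drawn independently of the computation and of the BPBO artifacts (C2). By C4 they are blinded by the same mechanism, so the base protocol's preparation indistinguishability applies: traps, dummies, and computation qubits all present as uniform $\delta$ values with identical reduced states. I would now run a hybrid over rounds, replacing one at a time each optimized computation round by a base-protocol computation round at $(n,m')$. The replacement is permissible because the view of each round is fixed by Steps 1 and 2, and because the randomness across rounds ($\theta$, $r$, trap positions) is fresh and independent. The hybrid ends at exactly the base test-round protocol's pre-abort transcript. The hypotheses of the base verification theorem are then met, so its soundness applies verbatim.

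\textbf{Main obstacle.} I expect the main difficulty to be the adaptive hybrid in Step 3. The server may entangle across rounds and choose later actions based on earlier outcomes. The claim of distributional equality therefore has to be made on the joint cq-state of the server's entire view, not round by round. My first attempt would be to condition sequentially on the full history and use the freshness of $(\theta,r)$ per qubit to show that each new message is independent of the computation given that history. C2 does real work at this point. Trap placement must not correlate with the selected-region plans, since those plans determine $m'$-internal angle structure. If it did, the conditional law of which qubits are traps could differ between computations even though $(n,m')$ is the same.
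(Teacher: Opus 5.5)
Your proposal is correct and is essentially the paper's proof. Both arguments compare the server's pre-abort view component by component, conditioned on the past transcript: graph and column order by C1/C3; received-qubit ensemble and $\delta$ stream by the one-time pad ($\delta$ uniform, and each qubit $I/2$ given $\delta$, for computation qubits, traps and dummies alike); trap placement by C2; preparation indistinguishability by C4. Both then hand soundness to the base theorem, and your round-by-round hybrid is a repackaging of that history-conditioned comparison. One correction: which rounds are tests is the client's secret choice in the base test-round protocol, not part of a public schedule. You may condition on it as a proof device, since it is drawn independently of the computation and averaging recovers the claim. But if it were server-visible, the base verification theorem's hypotheses would fail, which is exactly why the lemma is phrased as round indistinguishability.
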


Lemma~\ref{lem:v1} extends the same factoring to verification, and its
conditions are operational rather than technical: (C1) is met by generating
test rounds at the optimized dimensions, (C3) is the materializer's
mechanically checked invariant, and (C4) is the base protocol's own lemma.
(C2) is the one genuinely \emph{new} design constraint that optimization
introduces: trap placement must not be correlated with the computation or with
optimizer metadata---in particular, not with the deposit ledger, however
tempting it is to guard the ``load-bearing'' angles---since any such
correlation opens a leakage channel through test-round statistics. These are
compatibility conditions, not a new trap-soundness theorem. Under C1--C4 the
verified protocol can use the optimized geometry uniformly: every round, test
or computation alike, shrinks by the same factor (quantified in
Sec.~\ref{sec:results}).

\section{Implementation and Artifact-Gated Pipeline}\label{sec:pipeline}

Sections~\ref{sec:bpbo}--\ref{sec:security} define the BPBO contract,
certificates, and admission rule.  This section describes the submitted
artifact implementation of that contract.  Its semantic verifiers and
admission predicates are shared across the supported L1/L2/L3 candidate
certificates they inspect, while candidate generation is intentionally
registry-seeded.  Registration is only a source of candidates: a replacement
is executed only after it passes the semantic, frame, pattern, cost, and plan
gates below.  Unsupported or non-admitted regions fall back to the
unoptimized BFK09 materialization.

\noindent\emph{Evidence vocabulary.}
We label evidence as analytic proof, exact cyclotomic certificate,
numerical/tolerance reconstruction, sampled branch replay, runtime admission,
or regression equivalence.  Section~\ref{sec:results} uses these labels rather
than treating every artifact check as the same kind of evidence.

\subsection{Candidate Sources and Witness Registry}
Production circuits enter the implementation as \emph{basis streams} already
lowered to $\{H,\mathrm{CX},T,T^\dagger,\text{Paulis}\}$.  Since such streams
do not necessarily retain gate-level markers for multi-qubit cores, the
converter folds cores back semantically.  A window folds only when its
$8\times 8$ unitary equals $G\,X^{t}Z^{z}\,\CCZ\,G'$ exactly for per-wire
Hadamard gauges $G,G'$ and a Pauli correction.  The criterion is
decomposition-agnostic: any correct $\CCZ$/Toffoli expansion may fold, not
only a memorized template.  Each accepted fold carries a per-window equality
assertion, and the converted stream must recompose to the input at fidelity
$1.0$.

The witness registry is a candidate source and cache, not the definition of
BPBO.  It currently contains the clean $\CCZ$ three-cell witness
(\texttt{WIT-CCZ3}), the $H^{\otimes 3}\CCZ$ application witness
(\texttt{WIT-GROVER-BLOCK}), the endpoint-target CCX/Toffoli three-cell
application witness (\texttt{WIT-CCX-TARGET2}), the fixed middle-target
four-cell fallback (\texttt{WIT-CCX4}), and the composite four-application
Grover-3 pack (\texttt{PACK-GROVER3}).  The last entry is not a new primitive
witness; it is a registered composition with fixed output-frame decoding.
Regions without an executable registered candidate can still be
floor-certified and displayed as analysis results, but registry membership or
preview status alone never executes a replacement.

\subsection{Production Materializers}
The implementation has two materialization surfaces.  L1 and L2 run as a
cell-loop over the current operation stream: same-wire and adjacent two-wire
regions are certified, candidate replacements are constructed by the
registered families of Sec.~\ref{sec:layers}, and the local admission
predicate accepts only frame-safe witnesses.  L3 uses separate N3 candidates:
semantic CCZ/CCX-class regions are detected in the basis stream, before local
rewrites that would destroy the recoverable three-wire core, matched to
registered three-wire witnesses or composite packs, and selected only when the
resulting BFK09 pattern is smaller than the current materialization.

The materializer emits an execution plan rather than a mere pass log.  The
plan records ladder/admission status, stable handle, selected-for-execution
flag, backend, frame injections, decoder metadata, baseline/executed geometry,
materialized column spans, vertices, measured vertices, and the base-angle
table over $\Abfk$ under the deterministic schedule.  Preview regions explain
missed opportunities, but the accepted materialized pattern is the
authoritative execution object.

\subsection{Validation Gates}
Every submitted artifact payload is gated by checks tied to the contract
above: semantic equality or a registered witness certificate; declared
recomposition equality or tolerance; zero-branch semantics and branch-frame
closure; runtime-safe output frames; a standard-brickwork base-angle pattern
over $\Abfk$; positive materialized execution saving; and a plan whose
\texttt{execution\_plan.executed\_pattern} matches the runtime
\texttt{/phase/pattern}.  The decoder composes the per-shot branch byproduct,
the UBQC output one-time pad, and any static BPBO output frame.  Pattern-shape
checks enforce Theorem~\ref{thm:blind}'s hypotheses, while module batteries and
witness re-verification connect the implementation to App.~\ref{app:repro}.

This is the implementation boundary used in Sec.~\ref{sec:results}: submitted
lowered input streams receive the supported local analysis and certification,
but the submitted artifact optimizes execution only for the registry-seeded
candidate family.
Extending that family is a synthesis and materializer problem
(Sec.~\ref{sec:discussion}), not a change to the BPBO contract or to the
inherited UBQC security model.

\subsection{Cost Model and Selection}
We report four costs, because they answer different questions.  The executed
geometry is \emph{rows} times \emph{columns}; columns are the closest proxy
for measurement depth and communication rounds, while
\emph{vertices}$=$rows$\times$columns counts prepared qubits in the
full-pattern view.  \emph{Measured vertices} exclude the output column and
therefore count adaptive measurement steps.  \emph{Operation cells} count the
compiler-level regions selected before materialization and explain
\emph{why} a pattern shrank; they are not the executed geometry.  The
qubit-recycling platform adds a separate runtime-space quantity,
$n\times 2$ active qubits, determined by the two-column execution window.

Selection is therefore column-first.  A candidate may reduce operation cells
but still be rejected if its materialized pattern does not improve the
executed BFK09 geometry; a no-regression fallback then keeps the current
materialization.  Conversely, the authoritative data shown to the client and
in the certificate are the rows, columns, vertices, measured vertices, and
base angles of the accepted pattern itself.  The base-angle table is
client/certificate metadata used to generate blinded angles; optimizer tags,
candidate identifiers, and preview regions are not part of the server-visible
transcript.  This is why the execution plan is checked against the pattern
that actually runs: preview regions can explain an opportunity, but only the
accepted materialization is evidence of an optimized UBQC execution.

\section{Results}\label{sec:results}

\noindent\emph{Evidence map.}
The tables distinguish floor targets, exact witnesses, runtime-admitted
payloads, sampled simulator checks, and regression-equivalence evidence.  A
floor alone is not an executable-layout claim, and regression equivalence is
engineering regression evidence rather than an independent semantic proof.

\subsection{Layer-Representative Results}
The results are organized by the certified resynthesis layers of
Sec.~\ref{sec:layers}, not by a single flagship circuit.  Table~\ref{tab:layer-results}
freezes the representative results used in this manuscript.  The first row
isolates the L1 witness-admission claim; the next two exercise two-wire and
mixed one-/two-wire machinery; the following two isolate the L3
primitive/application distinction; the last row uses Grover-3 as an integrated
four-application case.

\onecolumngrid
\begin{center}
\refstepcounter{table}\label{tab:layer-results}
\begin{minipage}{0.96\textwidth}\fontsize{7}{8}\selectfont
\textbf{TABLE~\thetable.} Layer-representative optimization results.  When
geometry is reported, baseline and executed geometries are written as
rows$\times$columns; vertices are rows$\times$columns.  The authoritative
executed geometry is the materialized BFK09 pattern, while operation-cell
counts explain the selected region plan.  The reference column may be a raw
geometry, prior baseline, or floor target; the evidence column gives status
and handle.
\end{minipage}\par\vspace{1pt}
{\fontsize{7}{8}\selectfont
\renewcommand{\arraystretch}{0.94}
\setlength{\tabcolsep}{2.5pt}
\begin{ruledtabular}
\begin{tabular}{p{0.14\textwidth}p{0.18\textwidth}p{0.15\textwidth}p{0.17\textwidth}p{0.26\textwidth}}
Role & Example & Reference/baseline & Materialized/result & Evidence status and handle \\
\hline
L1 witness & one-wire SYNTH1Q & H-count floor $\lceil h/2\rceil$
& admitted one-brick BFK09 witness
& \texttt{HCOUNT-1W} exact reachability; SYNTH1Q branch-frame admission;
executable subset only \\
L2 sanity & Bell/CX & $2\times 13$ (26 vertices) & $2\times 5$ (10 vertices)
& \texttt{EQUIV-GATE} regression equivalence; compact two-wire execution \\
L1/L2 mixed & Grover-2 & $2\times 125$ (250 vertices) & $2\times 29$ (58 vertices)
& one-/two-wire certified resynthesis plus \texttt{EQUIV-GATE} materialization check \\
L3 primitive & clean $\CCZ$ & phase-gadget floor target $=3$ cells
& witness patch $3\times 25$ (75 vertices, 72 measured)
& \texttt{WIT-CCZ3}/\texttt{BRANCH-CLOSURE}; standalone CCZ is synthesis-available /
preview-safe unless selected by a runtime materializer \\
L3 application & endpoint CCX/Toffoli & endpoint floor target $=3$ cells
& executable payload $3\times 33$ (99 vertices, 96 measured)
& \texttt{WIT-CCX-TARGET2} exact witness plus branch replay; core witness
$3\times 25$, frame $Z\otimes Z\otimes I$ \\
Integrated & Grover-3 & $3\times 725$ (2175 vertices)
& $3\times 98$ (294 vertices, 291 measured), 12 cells
& \texttt{PACK-GROVER3}/\texttt{RUN-GROVER3} runtime-admitted registered
composite; \texttt{STAT-GROVER3} ideal $P(111)=0.9453125$ \\
\end{tabular}
\end{ruledtabular}}
\end{center}

\subsection{Benchmark Coverage and Fallback Controls}
The representative table above isolates the method's layers.  To answer the
separate question of breadth, the artifact package freezes a benchmark/control
matrix under handle \texttt{BENCH-CORPUS}.  This matrix is deliberately scoped:
it is a regression and evidence corpus for the implemented optimizer, not a
statistical optimality claim over all Clifford+$T$ inputs.  It combines the
ten-circuit \texttt{EQUIV-GATE} corpus with L3 controls that would otherwise be
easy to overstate: standalone clean $\CCZ$ is a preview-safe primitive unless a
runtime materializer selects it, endpoint-target CCX is runtime-admitted at
three macrocells, and fixed middle-target CCX remains a four-cell fallback.

\begin{center}
\refstepcounter{table}\label{tab:benchmark-coverage}
\begin{minipage}{0.96\textwidth}\fontsize{7}{8}\selectfont
\textbf{TABLE~\thetable.} Frozen benchmark and control coverage.
Baseline/executed geometries are rows$\times$columns when both are present.
The random rows are seeded Clifford+$T$ circuits from the \texttt{EQUIV-GATE}
corpus; their role is implementation coverage, not a distributional
performance claim.  The corpus has fallback and preview controls, but not a
broad no-op or admission-reject battery.
\end{minipage}\par\vspace{1pt}
{\fontsize{7}{8}\selectfont
\renewcommand{\arraystretch}{0.94}
\setlength{\tabcolsep}{2.5pt}
\begin{ruledtabular}
\begin{tabular}{p{0.17\textwidth}p{0.24\textwidth}p{0.17\textwidth}p{0.18\textwidth}p{0.18\textwidth}}
Group & Cases & Baseline $\to$ executed & Gate/control result & Scope \\
\hline
Positive controls
& Bell/CX; Grover-2; Grover-3
& $2{\times}13\to2{\times}5$;
  $2{\times}125\to2{\times}29$;
  $3{\times}725\to3{\times}98$
& \texttt{EQUIV-GATE} PASS; \texttt{PACK-GROVER3}/\texttt{RUN-GROVER3} PASS
& Layer demos plus registered integrated stress test \\
Seeded random 2q
& \texttt{rand2q\_s1}, \texttt{rand2q\_s2}
& $2{\times}93\to2{\times}45$;
  $2{\times}109\to2{\times}37$
& \texttt{EQUIV-GATE} PASS
& Two-wire corpus coverage \\
Seeded random 3q
& \texttt{rand3q\_s1}, \texttt{rand3q\_s2}
& $3{\times}133\to3{\times}69$;
  $3{\times}117\to3{\times}45$
& \texttt{EQUIV-GATE} PASS
& Three-wire non-specialized inputs \\
Seeded random 4q
& \texttt{rand4q\_s1}, \texttt{rand4q\_s2}
& $4{\times}109\to4{\times}61$;
  $4{\times}117\to4{\times}53$
& \texttt{EQUIV-GATE} PASS
& Wider compiler path; no L3 optimality claim \\
Endpoint CCX
& target row $2$
& floor target $3$ cells $\to3{\times}33$
& \texttt{WIT-CCX-TARGET2} PASS
& Runtime-admitted L3 application \\
Standalone $\CCZ$
& clean $\CCZ$ witness
& floor target $3$ cells $\to3{\times}25$ patch
& \texttt{WIT-CCZ3}/\texttt{BRANCH-CLOSURE} PASS
& Synthesis-available / preview-safe primitive \\
Middle-target CCX
& target row $1$
& floor target $3$ cells; known executable $4$ cells
& \texttt{WIT-CCX4} truth-table PASS
& Row-asymmetry fallback, not a silent replacement \\
$n{=}4$ platform smoke
& three random branch-pattern trials
& $4{\times}24$ window; peak active window $=8$
& \texttt{SMOKE-N4} PASS
& Platform-only smoke test; no optimization claim \\
\end{tabular}
\end{ruledtabular}}
\end{center}
\twocolumngrid

The current package includes fallback and preview-safe controls, but not a
broad admission-reject/no-op corpus.  This limits empirical claims about
rejection behavior on unstructured inputs.  The correctness claim instead
rests on the admission verifier and the materialized-pattern checks of
Sec.~\ref{sec:pipeline}; broader negative testing is a useful artifact
extension and is listed as a limitation in Sec.~\ref{sec:discussion}.

\subsection{Integrated Application: Grover-3}
We use three-qubit Grover search~\cite{grover96} (two iterations, marked state
$\ket{111}$) as a registered integrated stress test, not as a broad benchmark.
The production runtime receives only the lowered basis stream
$\{H,\mathrm{CX},T,T^{\dagger},X\}$ ($95$ gates).  The converter folds four
$\CCZ$-class subsequences semantically ($95\to 39$ abstract gates), records
four floor-3 cores and a fidelity-$1.0$ recomposition check, and gates
execution on \texttt{PACK-GROVER3}.  The runtime-admitted materialization
\texttt{RUN-GROVER3} is
\[
3 \times 98 = 294 \text{ vertices},
\]
versus $3\times 301 = 903$ for the prior certificate-era optimizer and
$3\times 725 = 2175$ for the raw lowering (Fig.~\ref{fig:columns}): a
$3.07\times$ reduction over the optimized baseline and $7.4\times$ overall.
In the dynamic-circuit schedule of Table~\ref{tab:resources}, this is
$291$ mid-circuit measurements/reset opportunities and $97$ adaptive column
cycles on six active qubits, versus $2172$ measurements and $724$ cycles for
the raw streamed reference.
The \texttt{RUN-GROVER3} server probe (fold count, certification, runtime
admission, and materialization) records $181.819$\,s, reported as $182$\,s;
the separate full-stack harness records its own timing.  The measured column
count lies in the predicted $97$--$105$ band: the twelve-cell core is exact,
the four blocks concatenate at $24\equiv 0 \pmod 8$ columns each, and Pauli
layers are represented in frame/decoder metadata rather than extra columns.

\begin{figure}[t]
\centering
\begin{tikzpicture}
\begin{axis}[width=0.92\columnwidth, height=4.6cm, ybar,
  bar width=9mm, ymin=0, ymax=2500,
  ylabel={brickwork vertices}, ylabel near ticks,
  symbolic x coords={raw, cert, bpbo},
  xtick=data, enlarge x limits=0.28,
  xticklabels={{raw lowering\\$3\times725$},
               {two-wire certified\\$3\times301$},
               {this work\\$3\times98$}},
  xticklabel style={align=center, font=\scriptsize},
  yticklabel style={font=\scriptsize},
  ylabel style={font=\scriptsize},
  nodes near coords, every node near coord/.append style={font=\scriptsize},
  axis lines*=left, clip=false]
\addplot[fill=black!55, draw=black] coordinates
  {(raw,2175) (cert,903) (bpbo,294)};
\node[font=\scriptsize, anchor=west] at (axis cs:cert,2175)
  {$7.4\times$ overall, $3.07\times$ vs.\ baseline};
\end{axis}
\end{tikzpicture}
\caption{Grover-3 materialization cost at the three stages of optimization:
raw Clifford+T lowering onto the brickwork (geometry $3\times 725$, i.e.,
$725$ columns on three rows), the prior two-wire certificate optimizer
($3\times 301$), and the full pipeline of this work ($3\times 98$,
\texttt{RUN-GROVER3})---a $7.4\times$ overall reduction. Each accepted pattern
remains in the standard brickwork family at its declared optimized dimension;
the leaked dimension $(n,m')$ is permitted leakage under
Theorem~\ref{thm:blind}.}
\label{fig:columns}
\end{figure}

\subsection{Measured Output Statistics}
We also check outcome-level behavior by measurement.  The frozen
\texttt{STAT-GROVER3} run executes the optimized pattern under the
computation-round recycled-window simulator with Born sampling, adaptive
corrections, and the composed output decoder; over $4000$ shots it yields
\[
P(\ket{111}) = 0.9445 \quad\text{vs.\ ideal } 0.9453
\]
($0.2\sigma$ for the marked bin at binomial $\sigma=0.0036$), with
total-variation distance $0.0028$ to the ideal eight-outcome distribution
(Fig.~\ref{fig:histogram}). Every shot ran with six active logical qubits
(Sec.~\ref{sec:platform}). The production runner, transfer-matrix machinery,
and an independently coded window sampler agree on the relevant reference
checks, so the histogram is a consistency check for the admitted pattern and
decoder, not an independent proof of blindness or verifiability.

\begin{figure}[t]
\centering
\begin{tikzpicture}
\begin{axis}[width=0.95\columnwidth, height=5.0cm, ybar,
  bar width=2.9mm, ymode=log, log origin=infty, ymin=0.003, ymax=2.2,
  ylabel={probability (log)}, ylabel near ticks,
  symbolic x coords={000,001,010,011,100,101,110,111},
  xtick=data, enlarge x limits=0.07,
  xticklabel style={font=\scriptsize}, yticklabel style={font=\scriptsize},
  ylabel style={font=\scriptsize},
  legend style={font=\scriptsize, at={(0.04,0.96)}, anchor=north west,
                draw=none, fill=none},
  legend cell align=left, axis lines*=left]
\addplot[fill=black!60, draw=black,
  error bars/.cd, y dir=both, y explicit,
  error bar style={black}] coordinates {
  (000,0.00725) +- (0,0.00134)
  (001,0.0095)  +- (0,0.00153)
  (010,0.0075)  +- (0,0.00136)
  (011,0.0085)  +- (0,0.00145)
  (100,0.00825) +- (0,0.00143)
  (101,0.0075)  +- (0,0.00136)
  (110,0.007)   +- (0,0.00132)
  (111,0.9445)  +- (0,0.00362)};
\addplot[fill=white, draw=black] coordinates {
  (000,0.0078125) (001,0.0078125) (010,0.0078125) (011,0.0078125)
  (100,0.0078125) (101,0.0078125) (110,0.0078125) (111,0.9453125)};
\legend{measured (4000 shots), ideal}
\end{axis}
\end{tikzpicture}
\caption{Decoded output distribution of the optimized Grover-3 pattern under
the computation-round recycled-window simulator (\texttt{STAT-GROVER3},
$4000$ Born-sampled shots, six active logical qubits), against the ideal
two-iteration Grover distribution. The
marked outcome $\ket{111}$ is recovered at $0.9445$ versus the ideal $0.9453$
($0.2\sigma$); total-variation distance $0.0028$. Error bars are
$\pm 1\sigma$ binomial at $4000$ shots; the vertical scale is logarithmic so
the seven unmarked outcomes remain resolvable. Decoding applies the
per-shot branch byproduct mask, the UBQC output $X$ pad, and the static BPBO
output frame; omitting a required term relabels the histogram.}
\label{fig:histogram}
\end{figure}

\subsection{Inherited Test-Round Cost Profile}
Under Lemma~\ref{lem:v1}, inherited test-round verification can be run on the
optimized geometry at \emph{reduced} absolute cost: every round---test or
computation---is a $3\times 98$ pattern, so the $\sim 3\times$ per-round saving
multiplies through the protocol's round overhead. On this geometry, the
\texttt{TRAP-REF} harness uses uniform traps on one bipartite class at density
$1/2$, giving $T\approx 73$ traps per round in expectation over the measured
columns.  Its analytic curve uses
\(1-(1-\sin^2(\varepsilon/2))^T\) and matches the frozen simulations across
$\varepsilon\in[0.05,0.2]$; a single-column $Z$-tamper model is detected with
probability $0.63$.  Since execution is column-streamed, rejected rounds can
abort when the first violation appears, saving $51$--$56\%$ of the remaining
columns conditional on rejection.  Soundness amplification remains the base
protocol's theorem; these measurements show the cost profile on the optimized
geometry.

\subsection{Endpoint Toffoli and the Layout Gap}
The L3 application row of Table~\ref{tab:layer-results} isolates the
Toffoli/CCX point that is easiest to misstate.  The floor algorithm certifies
$\mathrm{floor}(\mathrm{CCX})=3$, but floor certification and executable
layout are separate questions.  For the endpoint-target placement used by
the production Toffoli path (controls on rows $0,1$, target on row $2$), the
registered \texttt{WIT-CCX-TARGET2} witness closes at the floor: three
macrocells, connected core $3\times 25$ ($3\cdot 8+1$ columns), output frame
$Z\otimes Z\otimes I$, exact cyclotomic zero-branch certification, and
endpoint branch replay.  The submitted executable payload then adds the
standard eight-column runtime wrapper/boundary around that connected core,
giving the displayed $3\times 33$ pattern with $99$ vertices and $96$ measured
vertices; this payload dimension is not claimed as a separate minimal-layout
theorem.

The fixed middle-target placement remains deliberately separate.  The same
floor value is three, but the known middle-target witness is the registered
\texttt{WIT-CCX4} four-cell fallback with frame $X\otimes Y\otimes Y$.  This is not a
contradiction and not a silent replacement policy: the BFK09 three-row
geometry is not row-symmetric, so the endpoint-target application can absorb
the boundary Hadamards at the floor while the fixed middle target currently
uses the conservative four-cell path.  In the paper's main result table,
therefore, endpoint-target CCX/Toffoli is the L3 application result; the
middle-target witness is a scope and robustness item for the witness appendix.
The relation to adjacent systems is summarized in Sec.~\ref{sec:related}.

\subsection{Regression Equivalence of the Unified Loop}
The unified orchestration of Sec.~\ref{sec:pipeline} replaced the
development-era fixed pass chain, and the replacement is gated by a
reproducible regression-equivalence harness rather than by code inspection
(handle \texttt{EQUIV-GATE}, App.~\ref{app:repro}). On the frozen ten-circuit
corpus---Bell, Grover-2, Grover-3, the legacy fixed-middle Toffoli row, and
seeded random Clifford+$T$ circuits on two to four wires---the gate compares
declared fields against the pre-consolidation runtime: (i) materialized
pattern geometry, where the loop must do no worse and in fact reproduces every
column count exactly, including Grover-3 at $98$; (ii) the frozen
expected-output fields under the same decoder convention; and (iii)
per-family rewrite counters, pinning not just the result but the applied
rewrite families.  All three invariants hold on all ten circuits.  This is
regression evidence, not an independent semantic proof; the exact witnesses,
branch replay, runtime admission, and sampled statistics above provide the
claim-specific evidence.
The current endpoint-target \texttt{WIT-CCX-TARGET2} Toffoli path is
validated by the separate L3 witness/materializer battery summarized in
Table~\ref{tab:layer-results}; it is not conflated with the legacy
middle-layout row.

\section{Related Work}\label{sec:related}

\emph{Blind and verifiable quantum computation.} UBQC originates with the
brickwork protocol of Broadbent, Fitzsimons, and Kashefi~\cite{bfk09}, built
on the one-way model~\cite{rb01} and the measurement
calculus~\cite{dkp07}---our byproduct trackers are instances of the latter's
signal-shifting algebra. Verifiability was added through trap-based
protocols~\cite{fk17} and brought to practical overheads by test-round
schemes~\cite{leichtle21} (see~\cite{gkk19} for a survey); experimentally,
blind delegation was first demonstrated photonically~\cite{barz12}, and
verifiable blind delegation has been demonstrated on a networked trapped-ion
server with a photonic client~\cite{drmota24} and in a multi-client Qline
configuration~\cite{polacchi25}. In the BFK/prepare-and-send brickwork line,
the public graph is computation-independent within the declared leakage
surface. Overheads have been optimized at the protocol
level---the quantum communication of blind computation has been upper- and
lower-bounded, placing the brickwork protocol within a constant factor of
optimal for single-qubit clients~\cite{mantri13}---but that line bounds the
\emph{protocol}, uniformly; it does not provide a layer that
certifies how small the fixed-brickwork pattern of a \emph{given} computation
may be. (A
complementary line removes the quantum client
entirely under computational assumptions~\cite{mahadev18}; the present work
stays information-theoretic, in the prepare-and-send model where the
brickwork constraint---and hence the size-optimization problem---arises.)
BPBO contributes such a certifying optimization layer \emph{inside} the
constraint these protocols impose, with admission and compatibility arguments
(Theorem~\ref{thm:blind} and Lemma~\ref{lem:v1}) showing that admitted
rewrites compose with inherited blindness and verification guarantees rather
than re-deriving either.

\emph{MBQC pattern optimization.} Standardization~\cite{dkp07},
Pauli-flow preprocessing and circuit extraction~\cite{simmons21},
flow-preserving rewrite systems for Pauli-measurement patterns~\cite{mcelvanney22},
and graph-level optimizers such as Graphix~\cite{bkmp07,graphix} optimize or
reorganize MBQC representations by allowing changes to the command structure,
open graph, or a closely related graph-like representation; ZX-based circuit
simplification uses local complementation and pivoting in graph-like
diagrams~\cite{duncan20}. These methods are powerful when the graph or
graph-like representation is an optimization variable; under the fixed BFK09
leakage model, computation-dependent graph changes are not directly admissible
unless followed by padding or a security argument showing that the rewritten
graph leaks no computation-dependent structure. BPBO is complementary: it keeps
the output inside the standard brickwork family at declared optimized
dimensions, changes hidden measurement-angle data and declared permitted
length leakage, and certifies floors or executable witnesses over the
reachable BFK family $R_{\mathrm{BFK}}$ rather than over all graphs.

\emph{Brickwork resource reduction.} A line specific to UBQC lowers the
cost of the brickwork realization itself: Chien, Van Meter, and
Kuo~\cite{chien13} count the fault-tolerant brickwork layers of a circuit by
tiling each gate of a Clifford+T decomposition into bricks, while a more
recent thread reduces the qubit count by \emph{altering} the resource
state---substituting smaller cluster fragments for selected
bricks~\cite{yang22}, or replacing the fixed brickwork with non-fixed
variants that cut the ancillae per gate~\cite{ma24}. These approaches optimize
different resource axes: decompose-and-tile analyses the cost of standard
brickwork realizations, while resource-state variants change the graph and
therefore require their own leakage argument or padding discipline under a
BFK-style interface. BPBO instead preserves the standard brickwork family at
declared optimized dimensions and shortens admitted patterns by
arity-stratified certified resynthesis; the logical qubit-recycled runner then
executes the resulting standard brickwork pattern at constant logical active
width.

\emph{Circuit-model optimization and synthesis.} In the circuit model, many
resource measures are optimized under their own cost models, including exact
and approximate Clifford+T synthesis~\cite{kmm13,rs16}, T-count and T-depth
optimization via phase polynomials over $\mathbb{F}_2$-parities~\cite{amm14},
and entangling-gate lower bounds for unitary synthesis~\cite{sbm06}. Our
demand-side machinery adapts part of this reasoning to the blind setting: the
parity ledger of Sec.~\ref{sec:certificates} is a phase-polynomial and
phase-gadget representation~\cite{amm14,cowtan20}, and the L1/L2/L3 floors are analogues of
single-qubit, entangling-count, and three-wire cell-complexity bounds. The
L3 certificates borrow phase-gadget language from this synthesis tradition,
but the admitted objects are fixed-brickwork MBQC witnesses rather than
circuit-level rewrites. The
differences arise from the fixed-brickwork UBQC interface: supply is a fixed
brickwork cell menu rather than freely placed gates, residues chain through
Clifford frames across cells, and the output is a certificate that gates
execution rather than a circuit-level rewrite. The two
toolchains compose: circuit-level optimization applies upstream, before
lowering, and BPBO compresses certifiable local slack left by the lowering on
admitted regions.

\emph{Qubit reuse.} Measure-early-reset-and-reuse compilation is established
for circuit-model hardware, where finding the reuse schedule is itself an
optimization problem~\cite{decross23}. The runner of
Sec.~\ref{sec:platform} is the brickwork counterpart: column ordering makes
the two-column live window canonical for the logical brickwork dependency
structure (Table~\ref{tab:resources}), and the same lowering yields logical
dynamic-circuit schedules with $n\times$window active qubits. Earlier
circuit-based UBQC/MBQC simulation work studied small blind
computations on gate-model platforms, including a two-qubit Grover
instance~\cite{leechung25}. Here that execution viewpoint is used as context;
the present paper centers on BPBO-certified local reduction and
qubit-recycled execution of optimized standard brickwork patterns. What
distinguishes the runner from both lines is the coupling: constant-width
logical execution preserves the UBQC transcript interfaces---blinding,
byproduct decoding, and test-round compatibility under Sec.~\ref{sec:security}'s
conditions---which Sec.~\ref{sec:results} exercises through artifact-validated
simulation, statistics, and regression gates.

\emph{Design-space summary.} Table~\ref{tab:comparison} summarizes the
combination addressed here.  It is not a ranking of systems with different
goals: BPBO is compared as a certifying optimizer for the standard brickwork
family at declared optimized dimensions, while the qubit-recycled runner is
compared as a logical execution stack for the optimized standard brickwork
patterns.  The non-affirmative entries identify scope differences, not
deficits.

\onecolumngrid
\begin{center}
\refstepcounter{table}\label{tab:comparison}
\begin{minipage}{0.96\textwidth}\footnotesize
\textbf{TABLE~\thetable.} Representative systems from the two nearest lines
of work against the six properties addressed by BPBO and the qubit-recycled
execution stack.  This is a design-space summary, not a general-purpose
ranking: the compared systems pursue orthogonal goals, and each
non-affirmative entry names the reason, scoped to the cited version (Graphix
as described in~\cite{graphix}; qubit-reuse compilation as
in~\cite{decross23}) and footnoted.  The last row summarizes the combination
addressed in this paper.
\end{minipage}\par\smallskip
\footnotesize
\resizebox{0.98\textwidth}{!}{%
\begin{tabular}{lcccccc}
\toprule
 & graph-preserving & size-floor & blindness & verification & constant-width
 & artifact-validated \\
 & optimization & certificates & preservation & compatibility
 & UBQC execution & optimized execution \\
\midrule
Graphix~\cite{graphix} & graph-changing\textsuperscript{a}
 & not targeted\textsuperscript{b}
 & out of scope\textsuperscript{c} & out of scope\textsuperscript{c}
 & no UBQC transcript stack\textsuperscript{d} & ---\textsuperscript{d} \\
Qubit-reuse compilation~\cite{decross23} & circuit-model\textsuperscript{e}
 & not targeted\textsuperscript{b} & out of scope\textsuperscript{c}
 & out of scope\textsuperscript{c}
 & no UBQC transcript stack\textsuperscript{d} & ---\textsuperscript{d} \\
This work: BPBO + recycled execution & Yes (Sec.~\ref{sec:bpbo}) & Yes (Sec.~\ref{sec:certificates})
 & Yes (Thm.~\ref{thm:blind}) & Yes (Lem.~\ref{lem:v1})
 & Yes (Sec.~\ref{sec:platform}) & Yes (Sec.~\ref{sec:results}) \\
\bottomrule
\end{tabular}}
\par\smallskip
\begin{minipage}{0.98\textwidth}\raggedright\scriptsize
\textsuperscript{a}\,Optimizes by rewriting the open graph into
local-Clifford decorated form---a graph change that is not directly
admissible under fixed-topology blindness without padding or a separate
leakage argument (Sec.~\ref{sec:related}).
\textsuperscript{b}\,Emits optimized artifacts, not lower-bound
certificates on pattern size.
\textsuperscript{c}\,Blind or verified delegation is outside the system's
stated scope; no security claim is made or required there.
\textsuperscript{d}\,Neither system targets the UBQC transcript/admission stack
(blinding, byproduct decoding, and test-round compatibility), which these axes
require; qubit-reuse compilation targets circuit-model hardware.
\textsuperscript{e}\,Operates in the circuit model; the graph-preservation
axis does not apply.
\end{minipage}
\end{center}
\clearpage
\twocolumngrid

\section{Scope and Limitations}\label{sec:discussion}

\emph{Scoped optimality claims.} Theorem~\ref{thm:nogo} excludes two-cell
CCZ realizations within the CNOT+T phase-gadget family; the three-cell
witness lies in the same family, so the clean-window cell complexity is
exactly three within that family. The same witness gives an unconditional
three-cell upper bound in the full brickwork model under the stated macro-cell
convention. Outside the family we have adversarial evidence only---approximately
$1.8\times 10^{5}$ randomized searches over arbitrary-angle two-cell nets,
with frame-aligned fidelity reaching at most $0.854$---and a family-free
two-cell no-go remains open. Thus every formal ``exactly three'' statement in
Sec.~\ref{sec:ccz} carries the phase-gadget scope explicitly.

\emph{General verification, registry-seeded synthesis.} The verifier and
admission predicates are schema-general for submitted local candidates, but
candidate generation in the submitted artifact is registry-seeded. The L3
registry covers the clean $\CCZ$ witness, the $H^{\otimes 3}\CCZ$ application witness, the
endpoint-target $\mathrm{CCX}$ application witness, the fixed middle-target
fallback, and the registered Grover-3 composition
(Sec.~\ref{sec:results} and App.~\ref{app:witness}). The frame-chained
synthesizer is a heuristic with a sound verifier, not a deterministic or
complete witness compiler, so a local region without an admitted
witness/materialization retains the unoptimized BFK09 lowering. Closing this
gap---a synthesis-and-materialization procedure with completeness guarantees
over a specified $R_{\mathrm{BFK}}(3,k)$ target class, frame convention, and
admission contract---is a theory target, not a current claim.

\emph{Scaling beyond three wires.} The formal certificate language generalizes
as stated: the supplier side is governed by the binary symplectic
representation of Clifford/stabilizer operations~\cite{aaronson04,gottesman98},
and the demand side by a $2^n{-}1$-coordinate parity ledger, so a floor is
definable once the target class, supply filtration, and frame quotient are
specified. What grows is not only search cost but executable coverage:
materializers, frame metadata, branch checks, and benchmarks must also scale.
Complete all-target enumeration is used through $n=2$; the $n=3$ claims here
are targeted floor batteries, schedule enumerations, and witness verifications
for the registered CCZ/CCX-class families, not a catalog of every three-wire
unitary. Wider targets will need structure-exploiting search in place of
enumeration; the floor algorithm's cap-and-certify design supports scoped
certificates, but no complete beyond-three-wire compiler or materializer
coverage is delivered here.

\emph{Benchmark and rejection coverage.} The benchmark suite is a fixed
regression and evidence corpus for the implemented optimizer, not a
distributional performance study over Clifford+$T$ inputs. The artifact
includes fallback and preview-safe controls, but not a broad no-op or
admission-reject corpus covering malformed candidates, invalid frames,
unsupported windows, invalid alphabets, or non-improving materializations.
Consequently, the empirical claims are limited to the representative L1/L2/L3
cases, seeded regression circuits, and explicit L3 fallback checks of
Sec.~\ref{sec:results}; broader negative testing is future artifact work.

\emph{Size leakage and padding.} Pattern length is treated throughout as
permitted leakage, matching the BFK09 model in which the server sees the graph
dimensions. A client who must also hide optimized-size information can pad or
bucket the optimized pattern to public standard-brickwork dimensions, trading
away some or all of the resource the optimization recovered. Theorem~\ref{thm:blind}
applies at whichever declared dimensions are chosen, provided the padded or
bucketed pattern still satisfies the admission and transcript conditions; the
choice belongs to the application policy, not to the BPBO methodology.

\emph{Verification constraints and evidence level.} Lemma~\ref{lem:v1}'s
condition C2 obliges trap and dummy positions to be drawn independently of
the computation---conditioned only on the declared dimensions $(n,m')$ and,
in particular, uncorrelated with deposit ledgers or other high-impact angle
locations. A generator tuned to guard those locations falls outside the
inherited-verification argument. Our verification evidence is simulation-level
mechanism and cost evidence: detection and abort statistics
(Sec.~\ref{sec:results}) match the $\sin^2(\varepsilon/2)$ prediction for the
stated perturbation model on the optimized recycled-window geometry, and no
new soundness theorem is claimed beyond the inherited guarantees under
Lemma~\ref{lem:v1}'s conditions~\cite{fk17,leichtle21}. Verifiable blind
delegation has been demonstrated on hardware at small scale~\cite{drmota24,polacchi25};
what has not been demonstrated is BPBO-optimized standard-brickwork execution
with the UBQC/recycled-execution stack, and the dynamic-circuit
lowering~\cite{decross23} makes that a concrete experimental follow-up.

\section{Conclusion}\label{sec:conclusion}

In prepare-and-send BFK09 brickwork, the public graph family and declared
dimensions constrain optimization, but they do not eliminate it. BPBO shows
that BFK09-compatible patterns can be shortened to declared optimized
standard-brickwork dimensions by certified local resynthesis: regions are
analyzed by arity, floors or witnesses are checked, and only
protocol-admissible materializations are executed. The qubit-recycling runner
is a separate logical execution stack that makes the resulting long blind
patterns executable and testable at constant logical active width. The
submitted artifact implementation is registry-seeded rather than complete, but
it exercises a common certificate-and-admission discipline across
representative L1 one-wire, L2 two-wire, and L3 three-wire cases, with
Grover-3 as the integrated application. The next steps are a complete witness
compiler for a specified target class, structure-exploiting search and
executable coverage beyond three wires, broader no-op/admission-reject
benchmarks, and hardware execution of BPBO-optimized standard-brickwork
patterns.

\begin{acknowledgments}
This work was supported by Electronics and Telecommunications Research
Institute (ETRI) grant funded by the Korean government [26ZS1320, Research on
Quantum-Based New Cryptographic System for Ensuring Perfect Data Privacy].
\end{acknowledgments}

\section*{Data Availability}
The artifact package supporting the BPBO and qubit-recycled execution claims is
available as ancillary material accompanying this arXiv submission and is
described in App.~\ref{app:repro}.  The public record is defined by the
ancillary directory \texttt{anc/artifact\_package/}, \texttt{manifest.json},
and \texttt{SHA256SUMS.txt}.  The package
contains frozen result data, registered witness tables, verification scripts,
environment pins, and the code-only runtime used for the quick and full
reproducibility checks.  Usage and public-archive license status are recorded
in the artifact's \texttt{USAGE\_AND\_LICENSE.txt}; a DOI-bearing archive can
be cited in later versions when available.  The principal witness angle tables are printed
in full in App.~\ref{app:witness}, so the central $\CCZ$, Grover-block, and
CCX witness/layout claims can be checked independently of the packaged runtime;
benchmark, statistical, resource-model, and full-stack equivalence claims are
reproduced from the artifact package.

\appendix
\section{Proofs}\label{app:proofs}

\noindent\emph{Conventions.}
Throughout Appendix~\ref{app:proofs}, phases in parity ledgers are measured
in $\pi/4$ units modulo $8$, equality of implemented unitaries is modulo a
global phase unless stated otherwise, and
$R_z(t)=\mathrm{diag}(1,e^{it})$. We use column-vector multiplication,
$\ket{+_t}=(\ket{0}+e^{it}\ket{1})/\sqrt2$, and
$P(a,b)=\prod_r X_r^{a_r}Z_r^{b_r}$ in the printed wire order. A branch
$+\pi s$ is projector semantics for the observed outcome, not an extra
server instruction.

\subsection{Parity-Ledger Necessity}
For a $\pi/4$-valued diagonal $3$-qubit phase function in the CNOT+$T$ /
phase-gadget scope, write the phase over parity forms:
$\varphi(x)=\tfrac{\pi}{4}\sum_{L} c_L\,\chi_L(x)$ with $c_L\in\mathbb{Z}_8$,
$L$ ranging over the seven nonempty subsets of $\{x_0,x_1,x_2\}$ and
$\chi_L(x)=\bigoplus_{i\in L}x_i$.

A gadget realization specifies its deposits explicitly, so it \emph{comes
with} a representation $\varphi=\sum_i a_i\,\chi_{L_i}$; no uniqueness of the
coefficients is claimed (indeed none holds mod $8$: the parity-evaluation
matrix has determinant $\pm 32$, and e.g.\
$4(\chi_A+\chi_B+\chi_{A\oplus B})\equiv 0 \bmod 8$). What is invariant is the
\emph{odd support}:

\begin{lemma}[Necessity: the odd-support syndrome]\label{lem:ledger}
For a phase-gadget realization with deposit multiset $\{(L_i,\,a_i\pi/4)\}$,
define the deposit syndrome $o\in\mathbb{F}_2^{7}$ by
$o_L=\sum_{i:L_i=L} a_i \bmod 2$. Then:
(i)~$o$ is an invariant of the realized diagonal \emph{up to global phase}:
any two integer representations of phase functions equal mod $8$ up to an
additive constant differ, beyond the constant coordinate, by a kernel vector
of the parity-evaluation matrix, and every such kernel vector is even---so the
\emph{nonconstant} odd-support syndrome is well defined on diagonals modulo
global phase.
(ii)~$\CCZ$ has syndrome $o=(1,\dots,1)$: the expansion of $\pi x_0x_1x_2$
over parities has coefficients $(1,1,1,7,7,7,1)$, all odd.
(iii)~The syndrome is unchanged by single-qubit Pauli output frames on
diagonal targets. Consequently every gadget realization of $\CCZ$ modulo a
single-qubit Pauli frame deposits odd phase on \emph{all seven} parities---in
particular on the three $x_0$-containing forms.
\end{lemma}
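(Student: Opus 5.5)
The plan is to prove the three parts in the order (i), (ii), (iii), since (iii) reduces to (i). For (i) I will use the Walsh--Hadamard transform on $\mathbb{F}_2^3$ rather than reduce the $8\times 8$ parity-evaluation matrix mod $2$. Reducing that matrix mod $2$ loses rank, because the functions $\chi_L$ are linear mod $2$, so it would not detect evenness. The key identity is $\chi_L(x)=\tfrac12\bigl(1-(-1)^{L\cdot x}\bigr)$. Write $f(x)=c_\emptyset+\sum_{L\neq\emptyset}c_L\chi_L(x)$ with integer coefficients, and use character orthogonality $\sum_x(-1)^{L\cdot x}(-1)^{L'\cdot x}=8\,\delta_{LL'}$. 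For $L\neq\emptyset$ this gives
\[
  \sum_{x}f(x)\,(-1)^{L\cdot x}=-4\,c_L .
\]
Now suppose two integer representations agree mod $8$ up to an additive constant. Their difference then satisfies $f(x)=k+8g(x)$ with $g$ integer-valued. The constant $k$ is annihilated by every nontrivial character, so $c_L=-2\sum_x g(x)(-1)^{L\cdot x}$, which is even. Hence every nonconstant coordinate of a kernel vector is even. It follows that $o_L=c_L\bmod 2$ is well defined on diagonals modulo global phase. A realization's syndrome is computed by first summing deposits on the same $L$, so it coincides with the parity of these aggregated coefficients.

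For (ii) I will simply exhibit the expansion
\[
  4x_0x_1x_2=\chi_{0}+\chi_{1}+\chi_{2}-\chi_{01}-\chi_{02}-\chi_{12}+\chi_{012},
\]
where $\chi_{01}$ abbreviates $\chi_{\{x_0,x_1\}}$ and so on. Multiplying by $\pi/4$ gives $\pi x_0x_1x_2$, and the coefficients mod $8$ are $(1,1,1,7,7,7,1)$. The identity can be checked on the eight inputs, for instance $x=111$ gives $3-0+1=4$. Alternatively, it follows from the Fourier formula above applied to $f=4x_0x_1x_2$. All seven coefficients are odd.

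For (iii) I will split a single-qubit Pauli frame into its $Z$ and $X$ parts.
\begin{itemize}
\item A $Z_i$ factor is diagonal with phase $\pi x_i=\tfrac{\pi}{4}\cdot 4\chi_{\{x_i\}}$. It adds an even deposit and so leaves $o$ unchanged.
\item An $X^a$ factor is to be read through conjugation, $X^aDX^a=D(x\oplus a)$. This is the same relabeling the decoder applies. Under the substitution $x\mapsto x\oplus a$ we have $\chi_L(x\oplus a)=\chi_L(x)$ if $L\cdot a=0$ and $1-\chi_L(x)$ otherwise. Each coefficient is therefore either kept or negated mod $8$, and the difference is pushed into the constant. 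Negation preserves parity, so $o$ is unchanged.
\end{itemize}
Combining (i)--(iii), any gadget realization of $P\cdot\CCZ$ has the syndrome of $\CCZ$, namely all ones. In particular it deposits odd phase on $x_0\oplus x_1$, $x_0\oplus x_2$ and $x_0\oplus x_1\oplus x_2$.

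The step I expect to be delicate is the $X$-part of (iii). It requires fixing precisely what "the realized diagonal" means when the frame is non-diagonal. I would state it as follows: the gadget's diagonal, after pulling the frame's $X$-component through, equals $\CCZ$ relabeled by an affine shift. Invariance then rests on the sign-flip computation above together with (i).
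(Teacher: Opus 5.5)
Your proof is correct. Parts (ii) and (iii) agree in substance with the paper's proof. The paper also checks the expansion $(1,1,1,7,7,7,1)$ directly, treats $Z^b$ as an even shift of $4$ on each affected singleton parity, and uses $\chi_L(x\oplus a)\in\{\chi_L(x),\,1-\chi_L(x)\}$ for translations. The only organizational difference in (iii) concerns the point you flagged as delicate. The paper writes the realization in monomial form $X^aD$, so the frame equation $X^aD=X^aZ^b\,\CCZ$ cancels the $X$ part outright and leaves $D=Z^b\,\CCZ$. Your conjugation argument reaches the same conclusion and also covers the other placement of the frame.

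The genuine difference is in (i). The paper gets evenness of the kernel from the Smith normal form of the $8\times 8$ parity-evaluation matrix: its invariant factors $(1,1,1,1,2,2,2,4)$ all divide $4$. The resulting order-$32$ kernel $\mathbb{Z}_4\times\mathbb{Z}_2^{3}$ is tabulated and machine-checked in the artifact. You replace that computation with the Walsh--Hadamard inversion $c_L=-\tfrac14\sum_x f(x)(-1)^{L\cdot x}$, a self-contained two-line argument giving $c_L=-2\hat g(L)$. Going one step further also recovers the full order-$32$ kernel: $\hat g(L)\equiv\sum_x g(x)-2\,(L\cdot v)\pmod 4$ with $v=\sum_x g(x)\,x \bmod 2$.

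The SNF route packages the whole kernel as a single certificate. Yours explains why the statement holds and also where it stops. In $n$ variables the same calculation gives $c_L=-2^{4-n}\hat g(L)$, so evenness is special to $n\le 3$. At $n=4$ the all-ones vector on the fifteen nonempty parities is an odd kernel vector mod $8$. Beyond three wires, then, the odd-support syndrome is defined only modulo the mod-$2$ image of the kernel, which the $8\times 8$ certificate does not expose.
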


\begin{proof}
(i) The $8\times 8$ parity-evaluation matrix $M$ (columns: the constant and
the seven $\chi_L$; rows: the points of $\{0,1\}^3$) has Smith normal form
with invariant factors $(1,1,1,1,2,2,2,4)$; accordingly its kernel mod $8$ has
order $32$, isomorphic to $\mathbb{Z}_4\times\mathbb{Z}_2^{3}$, and consists
of even vectors only. Two of its relations are transparent:
$2\sum_{L}\chi_L\equiv 0$ (every nonzero point lies in exactly four of the
seven parities) and, for each variable $x_i$,
$4\sum_{L\ni x_i}\chi_L\equiv 0$ (every point lies in zero, two, or four of
the four parities containing $x_i$); the full kernel is tabulated in the
artifact (App.~\ref{app:repro}) and is reproduced by exhaustive enumeration.
A global phase shifts only the constant coordinate. Since every invariant
factor divides $4$, every kernel vector lies in $2\mathbb{Z}_8^{\,8}$; hence
representations of the same diagonal (mod global phase) agree mod $2$ on the
nonconstant coordinates, and $o$ is well defined.
(ii) Direct expansion (machine-verified):
$\pi x_0x_1x_2 = \tfrac{\pi}{4}\big[\textstyle\sum_i x_i
-\sum_{i<j}(x_i{\oplus}x_j) + (x_0{\oplus}x_1{\oplus}x_2)\big]$, i.e.\
$c=(1,1,1,7,7,7,1)$.
(iii) Write $P=X^aZ^b$. An $X$ output frame is a readout translation
$x\mapsto x\oplus a$, not a phase: writing the realization in monomial form
$X^{a}\,D$ with $D$ its diagonal factor, the frame equation
$X^aD=X^aZ^b\,\CCZ$ reduces to $D=Z^b\,\CCZ$. Translations send each parity to
itself up to a sign and a constant
($\chi_L(x\oplus a)=\chi_L(x)\oplus\chi_L(a)$; in the integer lift,
$u\oplus1=1-u=-u+1$ and $u\oplus0=u$), so they preserve the nonconstant
syndrome; admitting $X$ frames therefore does not relax the odd-support
requirement. For the diagonal equation, $Z^b$ multiplies the phase
by $\pi\sum_{i\in b}x_i$, an \emph{even} representation shift ($4$ per
affected single-variable parity). (Machine closure over all $64$ Pauli
frames.)
\end{proof}

\subsection{The Supply--Demand Floor (Theorem~\ref{thm:floor})}
\begin{proof}
Let $U$ admit a $k$-cell realization in the class $F$, i.e.\
$U\in\mathcal{R}_F(n,k)$ up to the model's gauge and Pauli-frame dressing.
Orbit-invariance of $D$ makes $D(U)$ well defined on the dressed target,
and the over-approximation property then places $D(U)\in S(k)$. By
definition of $\mathrm{floor}_D(U)$, $D(U)\notin S(j)$ for all
$j<\mathrm{floor}_D(U)$.  Hence no realization with fewer cells exists in
$F$: $k^F_{\min}(U)\ge\mathrm{floor}_D(U)$.  If the minimum is empty, then no
realizable $U$ in the stated class can have that demand, because any
realization would place it in some $S(k)$. Soundness of each
instantiation below therefore reduces to checking exactly two properties
of its $(D,S)$ pair---orbit-invariance of the demand and
over-approximation of the supply---which is how the proofs are organized.
\end{proof}

\subsection{Wire-Count Instantiations (Corollaries~\ref{thm:hcount},
\ref{thm:l2})}
\begin{proof}[Proof sketch (Corollary~\ref{thm:hcount})]
Both directions are general in $k$. \emph{Supply} (upper bound): one cell's
wire chain is $R_z H R_z H R_z$---exactly two Hadamards with free
$\Abfk$-phases---so $k$ concatenated cells realize, by construction, exactly
the alternating words with at most $2k$ Hadamards (adjacent phases merge).
\emph{Demand} (achievability): any $U$ with $h(U)\le 2k$ has a minimal-$H$
normal form; if its Hadamard count falls short of the supplied $2k$ by an odd
amount, the identity $H=S\,H\,S\,H\,S$ pads the word without changing the
realized unitary or exceeding the supply, and leftover slots absorb as
identity cells. Hence $\Rbfk{1}{k}=\{h\le 2k\}$ for every $k$. The exhaustive
enumeration in the cyclotomic ring $\mathbb{Z}[\zeta_{16}]$ (exact arithmetic,
no floating point) through $k=3$ is an independent confirmation of the
equality at small $k$, not the source of generality. Artifacts in
App.~\ref{app:repro}.
\end{proof}
\begin{proof}[Proof sketch (Corollary~\ref{thm:l2})]
A two-wire cell carries two rungs, hence entangling capacity two: $k$ cells
realize CNOT-cost at most $2k$, giving the floor $\lceil c/2\rceil$ with $c$
the Makhlin-invariant CNOT-cost~\cite{makhlin02}. Certification on the stated scope is by
finite exhaustion: all Clifford pairs, and the bounded-$T$ contexts as
full-alphabet neighborhoods of the CNOT base cells, each verified to meet the
floor by an explicit rewrite. Artifacts in App.~\ref{app:repro}.
\end{proof}

\subsection{Two-Cell Schedule Coverage and Theorem~\ref{thm:nogo}}
\begin{lemma}[Coverage]\label{lem:coverage}
Model a clean two-macro-cell window generously: the two-rung blocks act on
their wire pairs by \emph{any} element of $\mathrm{GL}(2,\mathbb{F}_2)$
(including SWAP), and odd deposits may be placed on every parity form carried
by a wire at a block boundary or inside a block's two-form span; the net
linear action must return to the identity. Exhaustively over all assignments
($6^4$ at start $5$; $3\cdot 6^3\cdot 3$ at start $7$): the identity-returning
assignments number $13$ and $28$, and the coverable subsets of
$\{x_0{\oplus}x_1,\,x_0{\oplus}x_2,\,x_1{\oplus}x_2,\,
x_0{\oplus}x_1{\oplus}x_2\}$ are
$\{3,6\},\{3,5,6\},\{3,6,7\}$ (start $5$) and
$\{6,7\},\{3,6\},\{3,6,7\},\{3,5,6\},\{5,6,7\}$ (start $7$), in the mask
notation $x_0{=}1,x_1{=}2,x_2{=}4$. (Single-variable parities are carried by
the input wires and are always coverable; the four nontrivial parities are the
binding ones.) In every case at most two of the three $x_0$-containing forms
are covered.
\end{lemma}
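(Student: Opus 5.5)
The plan is to make the coverage model concrete and finite, and then settle the statement by exhaustive enumeration. I would organize the enumeration so that the counts $13$ and $28$ and the listed subsets fall out directly.

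First I fix the block sequence of a clean two-macro-cell window from the period-$8$ stagger. At start $5$ this gives four two-rung blocks, alternating $(1,2),(0,1),(1,2),(0,1)$ as in Fig.~\ref{fig:obstruction}. At start $7$ the rung phase shifts, so the window has a leading partial block, three full blocks, and a trailing partial block. I would read the exact rung columns off the $8$-periodic rung rule and record this layout. It is what produces the $3\cdot 6^3\cdot 3$ assignment space: the partial blocks admit only three effective $\mathbb{F}_2$ actions, namely identity and the two CNOT directions, with no rotation or SWAP available in a single rung.

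Each full block is assigned an element of $\mathrm{GL}(2,\mathbb{F}_2)$ acting on its wire pair. This group has order $6$: identity, the two CNOTs, SWAP, and the two order-$3$ elements. This assignment is the over-approximation. Any phase-gadget realization induces, block by block, such a linear map on the parity labels carried by the wires, because monomial CNOT+T blocks act on computational-basis labels by invertible affine maps. Pauli parts only add affine constants, and by Lemma~\ref{lem:ledger}(iii) these do not affect the nonconstant syndrome.

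Next, for each assignment I track the $3\times 3$ matrix of parity forms carried by the wires. Starting from the identity, I apply each block's action in turn and collect the forms exposed. These are the forms on each wire at block boundaries, together with the two-form span inside a block: for pair $(a,b)$ carrying forms $u,v$, the span contains $u$, $v$ and $u\oplus v$. I keep only the assignments whose net matrix is the identity, as required by the requirement ledger for $\CCZ$ with trivial linear skeleton. I then record the covered subset of the four nontrivial parities, masks $3,5,6,7$. Enumerating $6^4=1296$ and $3\cdot 6^3\cdot 3=1944$ cases is routine, and the result is the list of identity-returning counts and coverable subsets.

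The final claim is read off that list. No subset contains all of $\{3,5,7\}$, since $x_0\oplus x_2$ (mask $5$) appears only together with $3,6$ at start $5$ and with $6,7$ or $3,6$ at start $7$. This also yields Theorem~\ref{thm:nogo} via Lemma~\ref{lem:ledger} and Theorem~\ref{thm:floor}.

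I expect the main obstacle to be justifying that the generous menu really over-approximates the physical cell. Two things need care there: which deposit moments are physically available within a block, and the start-$7$ boundary blocks. I would argue the first by showing that any measured column inside a two-rung block carries a form in the span $\{u,v,u\oplus v\}$ of the block's input and output pair. I would also present a structural shortcut as a cross-check: wire $0$ changes only at $(0,1)$ blocks, and there are two of them, while the identity-return constraint forces the second to undo the first's effect on wire $0$. So wire $0$ visits at most one non-$x_0$ mixture, and at most two $x_0$-containing forms can become exposed.
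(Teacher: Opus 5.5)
This is essentially the paper's proof: a finite enumeration over the stagger-fixed block sequences with the generous menu. That menu allows all of $\mathrm{GL}(2,\mathbb{F}_2)$ per two-rung block and $\{\mathrm{id},\mathrm{CX}^{\rightarrow},\mathrm{CX}^{\leftarrow}\}$ per single rung, and it is made sound by the same over-approximation argument (monomial block actions, deposits only on carried forms, identity return forced via Lemma~\ref{lem:ledger}(iii)). Your coverage rule, the full span $\{u,v,u\oplus v\}$ for every block including the single rungs, reproduces the counts $13$ and $28$ and exactly the listed subsets.

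Do not lean on your wire-$0$ shortcut, because it is not a valid argument as stated: the $x_0$-containing forms are typically exposed on wires $1,2$, not on wire $0$. For example, at start $5$ take an identity first $(1,2)$ block, a CNOT from wire $0$ onto wire $1$, an identity second $(1,2)$ block, and the same CNOT again. This exposes $x_0\oplus x_1$ and $x_0\oplus x_1\oplus x_2$ while wire $0$ carries $x_0$ throughout, so the enumeration has to carry the proof.
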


The schedules are fixed by the brickwork's stagger: a clean window starting at
column $\equiv 5 \pmod 8$ carries vertical rungs at relative columns $\{1,3\}$
on wire pair $(1,2)$ and $\{5,7\}$ on $(0,1)$, so a two-cell window is four
same-pair two-rung blocks (menu $6^4$); a window starting at $\equiv 7$
carries rungs at $\{1\}{=}(1,2)$, $\{3,5\}{=}(0,1)$, $\{7\}{=}(1,2)$, and the
two-cell junction merges the adjacent $(1,2)$ rungs, giving the block sequence
single--double--double--double--single (menu $3\cdot 6^3\cdot 3$). The menu
grants every two-rung block \emph{all} of $\mathrm{GL}(2,\mathbb{F}_2)$ and
every single rung $\{\mathrm{id},\mathrm{CX}^{\rightarrow},
\mathrm{CX}^{\leftarrow}\}$: for the lower bound only the superset direction
matters (physical reachability $\subseteq$ menu), and the menu is not
vacuously generous---CNOT actions are calibrated directly, and a
Makhlin-invariant scan~\cite{makhlin02} of the two-rung block family confirms that even the
iSWAP class is attainable.

\begin{proof}
Finite machine enumeration over the assignments above (artifacts: the
schedule-enumeration and per-form coverage scripts with their battery
outputs; availability per App.~\ref{app:repro}). Soundness of the
bound: any physical phase-gadget realization induces an assignment in the
model---a monomial block's linear action lies in $\mathrm{GL}(2,\mathbb{F}_2)$,
deposits occur only on carried forms (a subset of the allowed slots), and a
diagonal-mod-Pauli target forces identity linear return (Lemma~\ref{lem:ledger}
(iii) absorbs translations). The menu only over-approximates, so failure in
the model implies failure of every gadget realization.
\end{proof}

\begin{proof}[Proof of Theorem~\ref{thm:nogo}]
By Lemma~\ref{lem:ledger}, a two-cell gadget realization of $\CCZ$ modulo a
single-qubit Pauli frame must deposit odd phases on all three
$x_0$-containing parities. By Lemma~\ref{lem:coverage}, no clean two-cell
schedule covers all three. Contradiction.
\end{proof}

\subsection{Verification Protocol of Theorem~\ref{thm:witness}}
\begin{proof}[Proof (verification)]
The witness is an explicit artifact: three $3\times 8$ angle tables over
$\Abfk$ (App.~\ref{app:witness}). Verification separates four layers.
First, the floating transfer-matrix contraction on the production geometry
equals $(Y{\otimes}X{\otimes}Z)\cdot\CCZ$ after global-phase alignment with
maximum elementwise deviation below $10^{-15}$; a brute-force state-summation
rebuild agrees within $5\times 10^{-14}$, and a separately coded toolchain
reproduces the identity to unit fidelity. Second, every printed angle is in
$\Abfk$, so the adapted branches of Theorem~\ref{thm:p2} remain in the BFK
alphabet. Third, branch closure is a symbolic consequence of
Theorem~\ref{thm:p2}; the \texttt{BRANCH-CLOSURE} replay of all $72$ single
flips and $3000$ random branches is implementation evidence, not an additional
assumption in the proof. Fourth, the zero-branch identity is also certified
without floating point: because all angles are integer multiples of $\pi/4$,
the transfer-matrix path sum lies in $\mathbb{Z}[\zeta_8]^{8\times 8}$, and
proportionality to $(Y{\otimes}X{\otimes}Z)\cdot\CCZ$ is the division-free
cross-multiplication identity
$U_{ab}T_{i_0j_0}=T_{ab}U_{i_0j_0}$ over all $64$ entries. The same
\texttt{EXACT-WIT} method is applied handle-by-handle in
App.~\ref{app:witness}, including the Grover-block and both registered
$\mathrm{CCX}$ witnesses.
\end{proof}

\subsection{Proof of Lemma~\ref{lem:chaining} (Frame-Chained Synthesis)}
\begin{proof}
Define $R_0=I$ and let cell $j$ realize exactly
$U_j = P_j\,G_j\,T_j\,R_{j-1}^{\dagger}$ with $P_j$ a Pauli tensor and $G_j$ a
per-wire Hadamard gauge ($G_k=I$). By induction,
$R_j := (U_j\cdots U_1)(T_j\cdots T_1)^{\dagger} = P_j\,G_j$:
indeed $U_j R_{j-1} (T_{j-1}\cdots T_1) = P_jG_jT_j(T_{j-1}\cdots T_1)$.
At $j=k$, $R_k=P_k$, i.e.\ $U_k\cdots U_1 = P_k\,(T_k\cdots T_1)$. The residue
normal form $R_j=\mathrm{phase}\cdot\mathrm{Pauli}\cdot G$ was additionally
verified mechanically on the production witnesses. The premise---that each
adapted target is realizable at fidelity $1$---depends on the gauge choices
$G_j$; greedy selection can dead-end (observed on a four-cell schedule), so
the masks are searched, and existence of a closing assignment is established
per schedule by exhibiting the witness. This is a zero-branch chaining lemma;
nonzero branch behavior is handled separately by Theorem~\ref{thm:p2}.
\end{proof}

\subsection{Proof of Theorem~\ref{thm:p2} (Branch-Frame Closure)}
The proof rests on four exact single- and two-qubit identities, each verified
in isolation:
\begin{align}
&\text{(a)}\;\; R_z(t)\,X = e^{it}\,X\,R_z(-t),\nonumber\\
&\text{(b)}\;\; \mathrm{CZ}\,(X{\otimes}I)=(X{\otimes}Z)\,\mathrm{CZ},\nonumber\\
&\text{(c)}\;\; \ket{-_t}=\ket{+_{t+\pi}},\qquad R_z(t{+}\pi)=R_z(t)\,Z,\nonumber\\
&\text{(d)}\;\; HX=ZH,\qquad HZ=XH.\nonumber
\end{align}
\begin{proof}
In the pattern's column gauge, column $c$ applies a diagonal layer (the
measurement phases and the rung CZs) followed by a hop Hadamard on every wire.
Induct on columns with invariant $V_c = \mathrm{phase}\cdot \Pi_c\, W_c$,
where $W_c$ is the zero-branch partial map at base angles and $\Pi_c$ the
tracker Pauli. Pushing $\Pi_c$ through the adapted diagonal layer: on a wire
with $x_r=1$, identity (a) converts the sign-flipped angle back to the base
angle, and the branch flip $+\pi s$ contributes $Z^{s}$ by (c)---the tracker's
outcome injection; the scalar in (a) is absorbed into the global
\(\mathrm{phase}\) factor. $Z$-components commute with the diagonal layer. Each rung
spreads $X$-components onto the partner wire as $Z$ by (b)---the tracker's
rung rule. The hop Hadamard exchanges $X\leftrightarrow Z$ per wire by
(d)---the tracker's swap. Hence
$V_{c+1}=\mathrm{phase}\cdot\Pi_{c+1}W_{c+1}$, and at $c=N$ the claim follows.
$P_{\rm br}(s)$ depends only on $s$ and the public schedule, so it is client-computable;
the adaptation is a sign flip plus $\pi$-shifts, so angles stay in $\Abfk$.
The induction proves all branches; sampled replay validates the implementation:
\texttt{BRANCH-CLOSURE} checks the three-cell $\CCZ$ witness on all $72$
single flips and $3000$ random branches, while \texttt{PACK-GROVER3} checks the
twelve-cell composite on its sampled branch replay. The endpoint Toffoli replay
is a separate \texttt{WIT-CCX-TARGET2} artifact.
\end{proof}

\subsection{Proof of Theorem~\ref{thm:blind} (Blindness Preservation)}
\begin{proof}
The server-visible protocol view has three components. \emph{Graph}: the
rewrites introduce no non-standard topology---after materialization the graph
is the standard brickwork determined by $(n,m')$ alone. \emph{Angles}: per
qubit, $\delta=\phi'+\theta+r\pi$ with fresh
uniform $\theta$ on the cyclic group $\Abfk$ and uniform $r$; therefore
$\delta$ is uniform and independent of $\phi'$---hence of the computation and
of every other qubit (one-time-pad over $\mathbb{Z}_8$; the adapted $\phi'$
remains in $\Abfk$ by Theorem~\ref{thm:p2}, so no out-of-alphabet value leaks
structure). \emph{Received qubits and outcomes}: the joint BFK
state-angle message remains independent of $\phi'$, not merely its marginal
state. Condition on a public value of $\delta$ and on the past transcript. For
any adapted angle $\phi'$, the two compatible choices of $r$ give antipodal
preparations $\ket{+_\theta}$ and $\ket{+_{\theta+\pi}}$ with equal
probability, whose average density is $I/2$; hence the conditional quantum
message is independent of $\phi'$. Tensoring over independently blinded qubits
and iterating through the adaptive transcript gives the usual BFK
quantum-classical mixture. The outcome interaction is generated by the server's
measurements of this mixture at the public $\delta$ values. Every
server-visible component's distribution depends only on $(n,m')$, so the view
does. The base protocol's blindness theorem then applies at dimensions
$(n,m')$; the component check is the reduction to the base malicious-server
theorem, not a separate honest-server proof.
\end{proof}

\subsection{Proof of Lemma~\ref{lem:v1} (Round Indistinguishability)}
\begin{proof}
Compare the pre-abort view components across round types at fixed $(n,m')$,
conditioning on the past transcript.
\emph{Graph}: identical by construction. \emph{Received qubits}: computation
rounds send $\ket{+_\theta}$, $\theta$ uniform ($\Rightarrow I/2$ per qubit);
test rounds send dummies $\ket{z}$, $z$ uniform ($I/2$) or traps
$\ket{+_\theta}$ ($I/2$); preparations are independent, so both joint
ensembles equal $(I/2)^{\otimes nm'}$, and trap/dummy positions carry no
computation or optimizer-metadata dependence by assumption. \emph{Angles}: both
round types blind by
$\delta=\phi'+\theta+r\pi$, so the $\delta$-sequence is i.i.d.\ uniform in
both. \emph{Interaction}: the server receives the same pre-abort
quantum/classical transcript distribution as in the base protocol at
$(n,m')$. Thus the base protocol's verification theorem applies under its own
hypotheses. No new soundness is claimed; the content is that optimization does
not add distinguishable transcript structure under C1--C4.
\end{proof}
\section{Witness Data}\label{app:witness}
The witness tables below use a common format. Each macro-cell is a
nine-column start-$5$ window; the eight angle-carrying columns per wire are
listed, and the ninth column is the window boundary with no free phase. Frames
use the convention $P(a,b)=\prod_r X_r^{a_r}Z_r^{b_r}$ in the printed wire
order, up to global phase. The evidence ledger is handle-specific:
\begin{center}
\scriptsize
\setlength{\tabcolsep}{3pt}
\resizebox{\columnwidth}{!}{%
\begin{tabular}{llcll}
\toprule
object & handle & cells & frame & evidence \\
\midrule
$\CCZ$ & \texttt{WIT-CCZ3} & $3$ & $(3,5)$ &
\texttt{EXACT-WIT} + floating rebuild \\
$H^{\otimes3}\CCZ$ & \texttt{WIT-GROVER-BLOCK} & $3$ & $(4,6)$ &
\texttt{EXACT-WIT} + floating rebuild \\
Grover-3 pack & \texttt{PACK-GROVER3} & $12$ & pack ledger &
checkpoint/frame/distribution validation \\
$\mathrm{CCX}_{0,1\to2}$ & \texttt{WIT-CCX-TARGET2} & $3$ & $(0,3)$ &
\texttt{EXACT-WIT} + endpoint branch replay \\
$\mathrm{CCX}_{0,2\to1}$ & \texttt{WIT-CCX4} & $4$ & $(7,6)$ &
\texttt{EXACT-WIT} + truth-table validation \\
\bottomrule
\end{tabular}}
\end{center}
Table~\ref{tab:witness} is the explicit angle assignment for
Theorem~\ref{thm:witness}. Its composite equals
$(Y{\otimes}X{\otimes}Z)\cdot\CCZ$---frame bits $(a,b)=(3,5)$---with
elementwise deviation below $10^{-15}$ after global-phase alignment, and
exactly over $\mathbb{Z}[\zeta_8]$ by the cross-multiplication certificate of
App.~\ref{app:proofs}.

\begin{center}
\refstepcounter{table}\label{tab:witness}
\begin{minipage}{\columnwidth}\footnotesize
\textbf{TABLE~\thetable.} The three-cell $\CCZ$ witness: measurement angles in units of
$\pi/4$ (entry $k$ denotes $k\pi/4\in\Abfk$), per wire and per
angle-carrying column of each nine-column start-$5$ macro-cell window.
Handle: \texttt{WIT-CCZ3}.
The composite equals $(Y{\otimes}X{\otimes}Z)\cdot\CCZ$ exactly
(Theorem~\ref{thm:witness}).
\end{minipage}\par\smallskip
\footnotesize
\setlength{\tabcolsep}{3pt}
\begin{tabular}{llcccccccc}
\toprule
 & & \multicolumn{8}{c}{relative measured column} \\
\cmidrule(lr){3-10}
cell & wire & 0 & 1 & 2 & 3 & 4 & 5 & 6 & 7 \\
\midrule
1 & $x_0$ & 0 & 2 & 2 & 0 & 0 & 0 & 0 & 3 \\
  & $x_1$ & 1 & 0 & 2 & 0 & 0 & 0 & 2 & 3 \\
  & $x_2$ & 1 & 6 & 6 & 2 & 2 & 1 & 0 & 0 \\
\midrule
2 & $x_0$ & 2 & 0 & 0 & 0 & 2 & 2 & 0 & 0 \\
  & $x_1$ & 0 & 2 & 2 & 4 & 0 & 0 & 2 & 3 \\
  & $x_2$ & 2 & 0 & 2 & 3 & 0 & 0 & 0 & 0 \\
\midrule
3 & $x_0$ & 0 & 0 & 0 & 0 & 2 & 2 & 2 & 0 \\
  & $x_1$ & 2 & 2 & 2 & 2 & 2 & 0 & 0 & 2 \\
  & $x_2$ & 2 & 0 & 2 & 0 & 2 & 0 & 0 & 0 \\
\bottomrule
\end{tabular}
\end{center}

The registered witness family shares this format. Table~\ref{tab:grover}
prints the Grover block $B=H^{\otimes 3}\CCZ$ (three cells, with the
diffusion Hadamard layer absorbed into the boundary gauge; frame bits
$(a,b)=(4,6)$, i.e.\ $I{\otimes}Z{\otimes}Y$ up to global phase). This printed
block is exact-certified as \texttt{WIT-GROVER-BLOCK}. The twelve-cell
\texttt{PACK-GROVER3} composite is a separate registered pack assembled from
four chained blocks; its evidence is the stored checkpoint/frame ledger,
expected distribution, sampled branch replay, and runtime admission
(Sec.~\ref{sec:pipeline}), not a new primitive exact-witness theorem.
Artifact filenames are mapped in App.~\ref{app:repro}.

\begin{center}
\refstepcounter{table}\label{tab:grover}
\begin{minipage}{\columnwidth}\footnotesize
\textbf{TABLE~\thetable.} The Grover-block witness $B=H^{\otimes 3}\CCZ$: measurement angles
in units of $\pi/4$, same format and schedule class as
Table~\ref{tab:witness}; composite equals
$(I{\otimes}Z{\otimes}Y)\cdot H^{\otimes 3}\CCZ$ (phase-preserving
$24$-column period; Sec.~\ref{sec:ccz}). Handle:
\texttt{WIT-GROVER-BLOCK}.
\end{minipage}\par\smallskip
\footnotesize
\setlength{\tabcolsep}{3pt}
\begin{tabular}{llcccccccc}
\toprule
 & & \multicolumn{8}{c}{relative measured column} \\
\cmidrule(lr){3-10}
cell & wire & 0 & 1 & 2 & 3 & 4 & 5 & 6 & 7 \\
\midrule
1 & $x_0$ & 0 & 2 & 2 & 0 & 0 & 0 & 0 & 3 \\
  & $x_1$ & 1 & 0 & 2 & 0 & 0 & 0 & 2 & 3 \\
  & $x_2$ & 1 & 6 & 6 & 2 & 2 & 1 & 0 & 0 \\
\midrule
2 & $x_0$ & 1 & 0 & 3 & 3 & 0 & 0 & 0 & 3 \\
  & $x_1$ & 2 & 6 & 2 & 0 & 0 & 0 & 2 & 3 \\
  & $x_2$ & 2 & 0 & 6 & 2 & 2 & 1 & 0 & 0 \\
\midrule
3 & $x_0$ & 2 & 0 & 0 & 0 & 2 & 2 & 0 & 0 \\
  & $x_1$ & 0 & 2 & 2 & 4 & 0 & 0 & 2 & 0 \\
  & $x_2$ & 2 & 0 & 2 & 2 & 0 & 0 & 0 & 0 \\
\bottomrule
\end{tabular}
\end{center}

Table~\ref{tab:ccx-target2} prints the endpoint-target $\mathrm{CCX}$
witness used by the production Toffoli path in Sec.~\ref{sec:results}:
three cells, controls on $x_0,x_1$, target row $x_2$, and frame bits
$(a,b)=(0,3)$, i.e.\ $Z{\otimes}Z{\otimes}I$.  Its zero-branch map is
exact-certified by \texttt{EXACT-WIT}; the independent floating reconstruction
has maximum elementwise deviation $9.72\times 10^{-16}$ after global-phase
alignment.  The endpoint-specific branch replay passes all $72$ single flips
and $4000$ random branches, observing all $64$ possible output frames in that
sampled replay.  Runtime admission is the separate production-path check
summarized by \texttt{WIT-CCX-TARGET2}.

\begin{center}
\refstepcounter{table}\label{tab:ccx-target2}
\begin{minipage}{\columnwidth}\footnotesize
\textbf{TABLE~\thetable.} The endpoint-target three-cell $\mathrm{CCX}$ witness
(\texttt{WIT-CCX-TARGET2}, controls $x_0,x_1$, target wire $x_2$):
measurement angles
in units of $\pi/4$, same format as Table~\ref{tab:witness}; composite equals
$(Z{\otimes}Z{\otimes}I)\cdot\mathrm{CCX}_{0,1\to 2}$ exactly.
\end{minipage}\par\smallskip
\footnotesize
\setlength{\tabcolsep}{3pt}
\begin{tabular}{llcccccccc}
\toprule
 & & \multicolumn{8}{c}{relative measured column} \\
\cmidrule(lr){3-10}
cell & wire & 0 & 1 & 2 & 3 & 4 & 5 & 6 & 7 \\
\midrule
1 & $x_0$ & 1 & 7 & 2 & 2 & 7 & 5 & 0 & 7 \\
  & $x_1$ & 1 & 4 & 2 & 2 & 1 & 2 & 2 & 1 \\
  & $x_2$ & 2 & 0 & 2 & 2 & 2 & 4 & 0 & 3 \\
\midrule
2 & $x_0$ & 1 & 0 & 3 & 3 & 0 & 0 & 0 & 3 \\
  & $x_1$ & 2 & 6 & 2 & 0 & 0 & 0 & 2 & 3 \\
  & $x_2$ & 2 & 0 & 6 & 2 & 2 & 1 & 0 & 0 \\
\midrule
3 & $x_0$ & 0 & 0 & 0 & 0 & 2 & 2 & 2 & 0 \\
  & $x_1$ & 2 & 2 & 2 & 2 & 2 & 0 & 0 & 2 \\
  & $x_2$ & 2 & 0 & 2 & 2 & 2 & 0 & 0 & 0 \\
\bottomrule
\end{tabular}
\end{center}

Table~\ref{tab:ccx-middle} prints the fixed middle-target fallback witness
\texttt{WIT-CCX4}: four cells, controls on $x_0,x_2$, target row $x_1$, and
frame bits $(a,b)=(7,6)$, i.e.\ $X{\otimes}Y{\otimes}Y$ up to global phase.
It is exact-certified by \texttt{EXACT-WIT}; the independent floating
reconstruction has maximum elementwise deviation $9.16\times 10^{-16}$ after
global-phase alignment.  The searched three-cell closure failure is scoped to
the end-safe schedule family used by the witness search, whereas the four-cell
table is evidence for row-placement asymmetry and fallback realizability, not
the selected production Toffoli path.

\begin{center}
\refstepcounter{table}\label{tab:ccx-middle}
\begin{minipage}{\columnwidth}\footnotesize
\textbf{TABLE~\thetable.} The fixed middle-target four-cell $\mathrm{CCX}$ fallback witness
(\texttt{WIT-CCX4}, controls $x_0,x_2$, target wire $x_1$): measurement angles in units of $\pi/4$,
same format as
Table~\ref{tab:witness}; composite equals
$(X{\otimes}Y{\otimes}Y)\cdot\mathrm{CCX}_{0,2\to 1}$ exactly
(Sec.~\ref{sec:results}).
\end{minipage}\par\smallskip
\footnotesize
\setlength{\tabcolsep}{3pt}
\begin{tabular}{llcccccccc}
\toprule
 & & \multicolumn{8}{c}{relative measured column} \\
\cmidrule(lr){3-10}
cell & wire & 0 & 1 & 2 & 3 & 4 & 5 & 6 & 7 \\
\midrule
1 & $x_0$ & 3 & 0 & 5 & 6 & 2 & 3 & 0 & 2 \\
  & $x_1$ & 0 & 1 & 2 & 2 & 2 & 2 & 2 & 3 \\
  & $x_2$ & 3 & 2 & 2 & 1 & 4 & 2 & 0 & 0 \\
\midrule
2 & $x_0$ & 2 & 0 & 0 & 0 & 2 & 2 & 0 & 0 \\
  & $x_1$ & 0 & 2 & 2 & 0 & 0 & 0 & 2 & 1 \\
  & $x_2$ & 2 & 0 & 2 & 1 & 0 & 0 & 0 & 0 \\
\midrule
3 & $x_0$ & 2 & 2 & 0 & 6 & 2 & 3 & 0 & 3 \\
  & $x_1$ & 2 & 2 & 2 & 2 & 0 & 2 & 2 & 0 \\
  & $x_2$ & 2 & 4 & 6 & 0 & 4 & 2 & 6 & 0 \\
\midrule
4 & $x_0$ & 3 & 0 & 0 & 6 & 2 & 3 & 0 & 2 \\
  & $x_1$ & 2 & 1 & 4 & 1 & 2 & 2 & 0 & 2 \\
  & $x_2$ & 3 & 4 & 0 & 0 & 3 & 2 & 6 & 0 \\
\bottomrule
\end{tabular}
\end{center}

\section{Reproducibility and Availability}\label{app:repro}
\onecolumngrid
\begin{center}
\refstepcounter{table}\label{tab:impl-map}
\begin{minipage}{0.96\textwidth}\scriptsize
\textbf{TABLE~\thetable.} Implementation mapping for the public L1/L2/L3/R1
terminology used in the manuscript. Historical pass labels are audit labels;
the theory is the arity-stratified certify--construct--admit calculus of
Sec.~\ref{sec:pipeline}.
\end{minipage}\par\smallskip
\scriptsize
\setlength{\tabcolsep}{3pt}
\resizebox{\textwidth}{!}{%
\begin{tabular}{lll}
\toprule
Manuscript layer & Runtime/audit labels & Primary implementation files \\
\midrule
L1 one-wire resynthesis
& R2-HH, R9, R10/SYNTH1Q
& \texttt{local\_cancellation.py}, \texttt{template\_synthesis.py},
  \texttt{angle\_resynthesis.py}, \texttt{single\_brick\_synthesis.py} \\
L2 two-wire resynthesis
& E1-T, R12-E-pre, R11, L2-Reduce
& \texttt{two\_wire\_t\_context\_synthesis.py},
  \texttt{two\_wire\_region\_synthesis.py},
  \texttt{two\_wire\_synthesis.py}, \texttt{l2\_reduce.py} \\
L3 three-wire application synthesis
& CCZ/CCX witnesses, N3 candidates
& \texttt{l3\_ccz\_witness.py}, \texttt{l3\_toffoli\_core.py},
  \texttt{n3\_region\_decomposer.py},
  \texttt{l3\_grover3\_runtime\_pack.py},
  \texttt{payload\_builder.py} \\
R1 compact materialization
& compact scheduling
& \texttt{loop.py} and the final BFK09 pattern materializers \\
Equivalence gate
& EQUIV-GATE
& equivalence harness and frozen manifests \\
\bottomrule
\end{tabular}}
\end{center}

\begin{center}
\refstepcounter{table}\label{tab:artifacts}
\begin{minipage}{0.96\textwidth}\scriptsize
\textbf{TABLE~\thetable.} Claim-to-artifact map. The table reports stable
manuscript-facing handles; the companion manifest maps each handle to the
exact scripts, JSON outputs, hashes, and internal provenance identifiers in
the submitted artifact package.
\end{minipage}\par\smallskip
\scriptsize
\setlength{\tabcolsep}{3.5pt}
\resizebox{\textwidth}{!}{%
\begin{tabular}{ll}
\toprule
claim (anchor) & manuscript-facing artifact handle \\
\midrule
exact cell/macro-cell maps (Secs.~III--V) & \texttt{CELLMAP-3W}: transfer maps \\
$n{=}1$ reachability filtration (Cor.~\ref{thm:hcount}) & \texttt{HCOUNT-1W}: reachability filtration \\
two-wire finite-class CNOT-cost floor (Cor.~\ref{thm:l2}) & \texttt{L2-FLOOR}: finite-class optimality certificate \\
odd-support syndrome, kernel mod $8$ (Lem.~\ref{lem:ledger}) & \texttt{LEDGER-KERNEL}: Smith-form certificate \\
floor algorithm + battery $(1,1,2,3,3,3)$ (Secs.~IV--V) & \texttt{FLOOR-3W}: floor battery \\
phase-gadget two-cell $\CCZ$ no-go (Thm.~\ref{thm:nogo}) & \texttt{CCZ-NOGO-PG}: phase-gadget no-go search \\
three-cell witness, frame decode (Thm.~\ref{thm:witness}) & \texttt{WIT-CCZ3}: witness + reconstruction \\
Grover block and $12$-cell pack (Sec.~\ref{sec:ccz}) & \texttt{WIT-GROVER-BLOCK}, \texttt{PACK-GROVER3} \\
endpoint Toffoli three-cell application (Sec.~\ref{sec:results}) & \texttt{WIT-CCX-TARGET2}: exact witness + endpoint branch replay \\
branch closure, $72{+}3000$ branches (Thm.~\ref{thm:p2}) & \texttt{BRANCH-CLOSURE}: branch replay \\
four local identities; OTP fact (Thms.~\ref{thm:p2},~\ref{thm:blind}) & \texttt{LOCAL-ID}: identity checks \\
decomposer/converter generality (Sec.~\ref{sec:pipeline}) & \texttt{PIPELINE-CHECK}: converter/decomposer \\
server probe $3\times 98$, $182$\,s (Sec.~\ref{sec:results}) & \texttt{RUN-GROVER3}: server probe; full-stack harness separated \\
histogram statistics (Fig.~\ref{fig:histogram}) & \texttt{STAT-GROVER3}: measured distribution \\
test-round reference (Sec.~\ref{sec:results}) & \texttt{TRAP-REF}: trap/abort reference \\
middle-target Toffoli fallback (App.~\ref{app:witness}) & \texttt{WIT-CCX4}: registered four-cell witness \\
exact cyclotomic certificates for \texttt{WIT-CCZ3}/\texttt{WIT-GROVER-BLOCK}/\texttt{WIT-CCX-TARGET2}/\texttt{WIT-CCX4} & \texttt{EXACT-WIT}: integer identities \\
constant window beyond three wires (Sec.~\ref{sec:platform}) & \texttt{SMOKE-N4}: $n{=}4$ window test \\
dynamic-circuit resource schedule (Table~\ref{tab:resources}) & \texttt{HW-SCHEDULE}: resource model \\
regression-equivalence gate, $10/10$ (Sec.~\ref{sec:results}) & \texttt{EQUIV-GATE}: legacy-vs-unified harness \\
benchmark/control coverage (Table~\ref{tab:benchmark-coverage}) & \texttt{BENCH-CORPUS}: frozen matrix \\
paper-audit support scripts (no wrapper claim) & \texttt{PAPER-AUDIT}: manuscript consistency support \\
\bottomrule
\end{tabular}}
\end{center}
\begin{center}
\begin{minipage}{0.96\textwidth}
\raggedright
\emph{Archive and integrity.}
The reported numerical results in the tables, figures, and evaluation claims
mapped above trace to artifacts in the submitted artifact package
\texttt{anc/artifact\_package/}.  \texttt{SHA256SUMS.txt} fixes the individual
files inside the package, and \texttt{scripts/verify\_hashes.ps1} checks those
file-level hashes.

\emph{Public artifact record.}
The package is supplied as ancillary material with this arXiv submission.  The
artifact package, \texttt{manifest.json}, and \texttt{SHA256SUMS.txt} define the
public record for the reproducibility bundle; usage and license status are
stated in \texttt{USAGE\_AND\_LICENSE.txt}. A DOI-bearing archive can be cited
in later versions when available.

\emph{Printed and artifact-dependent checks.}
The paper remains independently checkable at its load-bearing witness points
without the runtime package: the principal angle tables are printed in
Tables~\ref{tab:witness}--\ref{tab:ccx-middle} and, together with the
brickwork semantics fixed in Secs.~\ref{sec:prelim} and~\ref{sec:ccz},
suffice to reconstruct the central $\CCZ$, Grover-block, and CCX witness
identities on a standard MBQC simulator; all proofs appear in full in
App.~\ref{app:proofs}. Sampled histograms, server timings, benchmark matrices,
resource-model outputs, and full-stack regression checks are artifact-backed
claims, with the Grover-3 distribution represented by \texttt{STAT-GROVER3}
and visualized in Fig.~\ref{fig:histogram}.

\emph{Environment and entry points.}
The package separates \texttt{docs/} (methodology notes),
\texttt{verification/} (self-contained scripts and their JSON outputs),
\texttt{witnesses/} (registered angle tables), \texttt{results/} (frozen
server-probe, benchmark, resource-model, and statistics artifacts),
\texttt{runtime\_v4/} (the code-only runtime used by the full checks), and
\texttt{scripts/} (reproducibility entry points). One environment pin is load-bearing
for full runtime reproduction: the production basis-stream generation is pinned
to Qiskit~1.3.3~\cite{qiskit24}---later major versions emit a different Grover-3 stream, in
which the final semantic fold does not arise---while the central witness and
floor checks are SDK-free apart from NumPy. Table~\ref{tab:artifacts} maps
claims to stable manuscript-facing artifact handles. The \texttt{manifest.json}
file is the authoritative map from each handle to scripts, JSON outputs,
hashes, and internal revision identifiers; the latter are engineering labels,
not proof objects. The reproducibility entry points are
\texttt{scripts/run\_quick\_checks.ps1} for the SDK-free witness/floor checks
and \texttt{scripts/run\_full\_checks.ps1} for the same checks plus the
\texttt{EQUIV-GATE} fast full-stack battery and L3 witness/materializer
validation. The \texttt{PAPER-AUDIT} handle denotes support scripts and frozen
outputs used for manuscript consistency; no single wrapper is claimed here.
\end{minipage}
\end{center}
\twocolumngrid
\section{Independent Cross-Verification Methodology}\label{app:method}
The results in this paper were produced under a deliberate two-track
discipline. Here ``independent implementation'' means independently coded from
the candidate-generation or production-runtime path at the checked boundary,
while sharing the declared mathematical conventions, witness file schema, and
public artifact inputs. The independently checked boundaries are the
theory-side transfer-matrix machinery, the production runtime materializers,
the stored witness rebuilds, and the Grover-3 pack validation.
\begin{center}
\scriptsize
\setlength{\tabcolsep}{3pt}
\resizebox{\columnwidth}{!}{%
\begin{tabular}{lll}
\toprule
evidence class & examples & meaning \\
\midrule
exact arithmetic & \texttt{EXACT-WIT}, floor certificates &
integer or finite-enumeration identity \\
floating rebuild & witness maps, transfer matrices &
machine-precision agreement after phase alignment \\
JSON/checkpoint identity & \texttt{PACK-GROVER3}, runtime admission &
stored semantic checkpoints and frame ledgers agree \\
sampled replay/statistics & branch replay, histograms &
implementation and empirical evidence within stated scope \\
\bottomrule
\end{tabular}}
\end{center}
Statements drafted on one track were audited on the other before being
promoted to claims, and the manuscript-facing numerical and figure claims
listed in App.~\ref{app:repro} were then anchored to machine-checked artifacts.
Mandatory acceptance assertions gate the runtime path, including per-window
recomposition fidelity, registered witness admission, frame-ledger checks,
materialized-window equality, and Grover-3 pack checkpoint checks.

We also report the failures this methodology caught, because they clarify the
claims that survived cross-checking. Four would-be over-claims were intercepted by
cross-audit before any external statement: (i)~an early version of the
cell-count limit theorem counted cells against a relabeled sub-core and was
re-proved with the corrected statement; (ii)~the parity-ledger necessity
argument originally asserted coefficient uniqueness mod~$8$---false, the
parity-evaluation kernel has order $32$---and was repaired into the
odd-support syndrome invariant of Lemma~\ref{lem:ledger} via the
Smith-normal-form analysis; (iii)~a results draft quoted a single binomial
draw ($\approx 83$ traps) as the per-round trap count where the expectation
($\approx 73$) was meant; (iv)~an enumeration claim overstated its scope
(``fully enumerated'' for $n=3$) and was rescoped to exhaustive-for-$n\le 2$
with group-order-verified sampling at $n=3$. These repairs are documented by
the corresponding audit scripts or revision notes rather than silently
absorbed. The same discipline produced the paper-audit support scripts of
App.~\ref{app:repro}, which check the manuscript-facing numerical statements
and figure data covered by \texttt{PAPER-AUDIT} against artifacts.

\bibliography{references}

\end{document}